\theoremstyle{remark}
\newtheorem{definition}{Definition}
\newtheorem{proposition}{Proposition}
\newtheorem{lemma}[proposition]{Lemma}
\newtheorem{theorem}[proposition]{Theorem}
\newtheorem{corollary}[proposition]{Corollary}
\def\squareforqed{\hbox{\rlap{$\sqcap$}$\sqcup$}}
\def\qed{\ifmmode\squareforqed\else{\unskip\nobreak\hfil
\penalty50\hskip1em\null\nobreak\hfil\squareforqed
\parfillskip=0pt\finalhyphendemerits=0\endgraf}\fi}
\def\endenv{\ifmmode\;\else{\unskip\nobreak\hfil
\penalty50\hskip1em\null\nobreak\hfil\;
\parfillskip=0pt\finalhyphendemerits=0\endgraf}\fi}
\newcounter{remark}
\newenvironment{remark}[1][]{\refstepcounter{remark}\par\medskip\noindent%
\textbf{Remark~\theremark #1} }{\medskip}
\newcounter{example}
\newcommand{\nc}{\newcommand}
\nc{\rnc}{\renewcommand}
\nc{\lbar}[1]{\overline{#1}}
\nc{\bra}[1]{\langle#1|}
\nc{\ket}[1]{|#1\rangle}
\nc{\dketbra}[2]{\vert #1 \rangle \hspace{-.8mm} \rangle \hspace{-.4mm} \langle\hspace{-.8mm}\langle #2 \vert}
\nc{\dbra}[1]{\langle\hspace{-.8mm}\langle #1\vert}
\nc{\dket}[1]{\vert#1\rangle\hspace{-.8mm}\rangle}
\nc{\ketbra}[2]{|#1\rangle\!\langle#2|}
\nc{\braket}[2]{\langle#1|#2\rangle}
\nc{\tr}{\operatorname{Tr}}
\nc{\ox}{\otimes}
\nc{\dg}{\dagger}
\nc{\dn}{\downarrow}
\nc{\cA}{{\cal A}}
\nc{\cB}{{\cal B}}
\nc{\cC}{{\cal C}}
\nc{\cD}{{\cal D}}
\nc{\cE}{{\cal E}}
\nc{\cF}{{\cal F}}
\nc{\cG}{{\cal G}}
\nc{\cH}{{\cal H}}
\nc{\cI}{{\cal I}}
\nc{\cJ}{{\cal J}}
\nc{\cK}{{\cal K}}
\nc{\cL}{{\cal L}}
\nc{\cM}{{\cal M}}
\nc{\cN}{{\cal N}}
\nc{\cO}{{\cal O}}
\nc{\cP}{{\cal P}}
\nc{\cQ}{{\cal Q}}
\nc{\cR}{{\cal R}}
\nc{\cS}{{\cal S}}
\nc{\cT}{{\cal T}}
\nc{\cU}{{\cal U}}
\nc{\cV}{{\cal V}}
\nc{\cX}{{\cal X}}
\nc{\cY}{{\cal Y}}
\nc{\cZ}{{\cal Z}}
\nc{\cW}{{\cal W}}
\nc{\csupp}{{\operatorname{csupp}}}
\nc{\qsupp}{{\operatorname{qsupp}}}
\nc{\var}{{\operatorname{var}}}
\nc{\idop}{{\mathbbm{1}}}
\nc{\argmin}{{\operatorname{argmin}}}
\nc{\RR}{{{\mathbb R}}}
\nc{\CC}{{{\mathbb C}}}
\nc{\FF}{{{\mathbb F}}}
\nc{\NN}{{{\mathbb N}}}
\nc{\ZZ}{{{\mathbb Z}}}
\nc{\PP}{{{\mathbb P}}}
\nc{\QQ}{{{\mathbb Q}}}
\nc{\UU}{{{\mathbb U}}}
\nc{\EE}{{{\mathbb E}}}
\nc{\id}{{\operatorname{id}}}
\newcommand{\Choi}{Choi-Jamio\l{}kowski }
\begin{document}
\title{Optimal Hamiltonian recognition of unknown quantum dynamics}

\author{Chengkai Zhu}\thanks{C.Z. and S.H. contributed equally to this work.}
\author{Shuyu He}\thanks{C.Z. and S.H. contributed equally to this work.}
\author{Yu-Ao Chen}
\author{Lei Zhang}
\author{Xin Wang}
\email{felixxinwang@hkust-gz.edu.cn}
\affiliation{Thrust of Artificial Intelligence, Information Hub,\\
The Hong Kong University of Science and Technology (Guangzhou), Guangzhou 511453, China}

\begin{abstract}
Identifying unknown Hamiltonians from their quantum dynamics is a pivotal challenge in quantum technologies. In this paper, we introduce Hamiltonian recognition, a framework that bridges quantum hypothesis testing and quantum metrology, aiming to identify the Hamiltonian governing quantum dynamics from a known set of Hamiltonians. To identify $H$ for an unknown qubit quantum evolution $\exp(-iH\theta)$ with unknown $\theta$, from two or three orthogonal Hamiltonians, we develop a quantum algorithm for coherent function simulation, built on two quantum signal processing (QSP) structures. It can simultaneously realize a target polynomial based on measurement results regardless of the chosen signal unitary for the QSP. Utilizing semidefinite optimization and group representation theory, we prove that our methods achieve the optimal average success probability, taken over possible Hamiltonians $H$ and parameters $\theta$, decays as $O(1/k)$ with $k$ queries of the unknown unitary transformation. Furthermore, we demonstrate the validity of our protocol on a superconducting quantum processor.
We also investigate a physically motivated recognition task for Heisenberg Hamiltonians, providing numerical evidence for effective multi-qubit quantum system recognition.
This work presents an efficient method to recognize Hamiltonians from limited queries of the dynamics, opening new avenues in composite channel discrimination and quantum metrology.
\end{abstract}

\date{\today}
\maketitle

\section{Introduction}
The evolution of a quantum state in an isolated system is governed by the Schr\"odinger equation, with a Hamiltonian $H$ determining the unitary operation $U_H(\theta) \coloneqq \exp(-i H\theta)$. Learning this unitary operation can be approached through quantum process tomography~\cite{Mohseni2008,Gutoski_2014} or by fitting the parameter $\theta$ based on a hypothesized model $H$, reminiscent of assuming a linear or other model and performing regression in machine learning. In the context of quantum metrology~\cite{Giovannetti_2004,Giovannetti_2006,Degen2017}, the hypothesis on $H$ is fixed and estimating $\theta$ for a known Hamiltonian $H$ has been widely studied and has broad applications in various fields~\cite{Huelga1997,Dorner2009,Anisimov2010,Schnabel_2010,Joo2011,Genoni2011,Thomas2011,Barzanjeh_2015}. The ultimate precision in parameter estimation is a central focus of quantum metrology, with the standard quantum limit scaling as $1/k$ and the Heisenberg limit scaling as $1/k^2$~\cite{Kitagawa1993,Demkowicz2014,Yuan2015,Yuan2016,Pang_2017,Liu_2024}, where $k$ is the number of samples for the target quantum system. Moreover, multi-parameter quantum metrology where the Hamiltonian is not fully known is substantially more complex due to the incompatibility~\cite{Szczykulska_2016,Liu_2019,Demkowicz_Dobrza_ski_2020}. The celebrated quantum Cram\'er-Rao  bound~\cite{helstrom1969quantum,holevo2011probabilistic} can typically not be saturated~\cite{Ragy2016}.

However, could the challenges of multi-parameter estimation be circumvented if we have some prior information about the Hamiltonian? For instance, consider the task of estimating parameters $\alpha$ and $\beta$ encoded in a Hamiltonian $H= \alpha H_0 + \beta H_1$. Suppose we further know there is a possible multiplicative relationship between $\alpha$ and $\beta$, e.g., either $\alpha = a \beta$ or $\alpha = b \beta$ for some known $a,b$. Can we efficiently identify which case actually holds and then reduce the problem to single-parameter metrology? Motivated by this question, we introduce the task of \textit{Hamiltonian recognition}, which is a basic form of Hamiltonian learning.

The specific problem we consider is to guess $H$ from several given candidates by querying the unknown evolution $U_H(\theta)$ multiple times for \textit{arbitrary} $\theta$. The target is to maximize the success probability averaging over all possible $H$ and $\theta$. 
Hamiltonian recognition differs from Hamiltonian learning~\cite{Wang_2017,Bairey2019,Li2020,Che2021,wilde2022,Yu_2023,Huang2023,Gu_2024} or certification~\cite{Wiebe2014}, where the evolution time is typically controllable. In contrast, Hamiltonian recognition aims to develop a time-independent protocol with a single-shot measurement.

The Hamiltonian recognition is closely related to the composite hypothesis testing of quantum channels~\cite{Bergh2023}, where an unknown channel is promised to be from one of two sets, and the task is to decide which set it belongs to by querying it $k$ times. In Hamiltonian recognition, two (or more) sets of unitary operations are generated by different possible Hamiltonians with varying parameters $\theta$. When $\theta$ is fixed, the corresponding binary and multiple unitary discrimination have been extensively studied (e.g., ~\cite{Acin2001,Ariano2001,D_rquot_Ariano_2002,Duan2007,Laing2009,Soeda_2021,Bavaresco_2021,Bavaresco_2022}). It was shown that any pair of unitary operations could be perfectly discriminated with a finite number of uses~\cite{Acin2001,Ariano2001}. Such perfect discrimination may not even require entanglement or joint quantum operations~\cite{Duan2007}. However, little is known about \textit{composite unitary discrimination} where the goal is to distinguish between two sets of unitary operations, potentially containing an infinite number of elements. The bottleneck arises from the diverse strategies that can be applied to unitary operations, such as parallel, sequential, and indefinite causal order~\cite{Oreshkov_2012,Chiribella2013,Bavaresco_2021}, as well as the complexity arising from two sets of unitaries.

In this work, we develop a protocol for binary recognition of the underlying single-qubit Hamiltonian, given as an unknown oracle $\exp(-iH\theta)$. This protocol achieves an average error of $O(1/k)$ where $k$ is the number of queries to the unknown unitary, a scaling that is proven to be optimal when taking the average over all possible Hamiltonians and parameters. Remarkably, our protocol utilizes the technique of quantum signal processing (QSP)~\cite{Low_2016,low_optimal_2016} to analyze protocol resolution errors. The QSP provides a systematic approach to making a quantum system transform under certain polynomial functions~\cite{yu_power_2022} and is also helpful for analyzing multiple unitary discrimination~\cite{Rossi2021}. As a byproduct, we reveal that for any positive integer $k$, there exist two sets, each containing $k$ qubit unitaries, such that many pairs of unitaries, one from each set, are not orthogonal, yet these sets can be perfectly distinguished using only $k$ queries without the need for entanglement.

Notably, we demonstrate the optimality of our protocol by using semidefinite programming (SDP). We also show different strategies involving processes with indefinite causal order~\cite{Oreshkov_2012,Chiribella2013}, will not offer any advantage in the recognition of Pauli $X$ and Pauli $Z$. To validate our results, we implement our binary recognition protocol on a superconducting quantum processor from Tencent and observe the desired error decay. Furthermore, to tackle ternary Hamiltonian recognition, we develop a quantum algorithm, consisting of coherent uses of two QSP circuits, that can realize a target polynomial based on the measurement results regardless of which signal unitary is chosen for the QSP. This protocol achieves an optimal average error of $O(1/k)$ for odd $k$.

\section{Results}
\subsection{Preliminaries}
To address the task of Hamiltonian recognition, we begin by introducing quantum channel discrimination~\cite{Pirandola_2019,Wilde_2020,Zhuang2020,Li2022,Salek2022}. Let $\cL(\cH)$ be the set of linear operators acting on a finite-dimensional Hilbert space $\cH$. Quantum channels from an input system $I$ to an output system $O$ are completely positive and trace-preserving maps $\cN:\cL(\cH_{I})\rightarrow \cL(\cH_{O})$. The Choi operator of a quantum channel is defined as $J^{\cN}_{IO}\coloneqq \sum_{j, l=0}^{d_I-1}\ketbra{j}{l}_{I} \ox \cN_{I \to O}(\ketbra{j}{l}_{I})$. Particularly, the Choi operator of a unitary $U$ is denoted as $\dketbra{U}{U}$ where $\dket{U}\coloneqq \sum_j (\idop\ox U)\ket{jj}$. Given access to one of two possible quantum channels, $\cN_0$ or $\cN_1$, the objective is to determine which channel has been provided by querying the unknown channel. This is typically done by sending a quantum state into the channel and performing a positive operator-valued measure (POVM), given by a tuple $M\coloneqq \{M_0, M_1\}$ where $M_1 = \idop - M_0$. If the measurement outcome is `0', one guesses the channel is $\cN_0$; if the outcome is `1', one guesses the channel is $\cN_1$.

The notion of a \textit{tester} has been proposed as a useful tool to study quantum channel discrimination~\cite{Chiribella_2008,Chiribella_2008tester,Bavaresco_2021,Bavaresco_2022}. A binary tester is a set of positive semidefinite operators $T \coloneqq \{T_0, T_1\}$ that obey certain linear constraints. By taking the trace with the Choi operator of a quantum channel $J^{\cN}_{IO}$, it results in a valid probability distribution according to $p(\cN_{I \to O}|J^{\cN}_{I O}) = \tr(T_0 J^{\cN}_{I O})$. Essentially, a tester is a quantum comb without the dimension for input and output systems, simplifying the study of quantum channel discrimination by encompassing any possible scheme with parallel, sequential, and combined setups of the tested channels, depending on the linear constraints imposed on $W\coloneqq\sum_jT_j$. The set of all testers with different strategies is denoted as $\cT^{\cS}$ with $\cS\in\{\texttt{PAR}, \texttt{SEQ}, \texttt{GEN}\}$, which correspond to parallel, sequential, and general strategies, respectively. The optimization over quantum testers essentially optimizes over all possible input probes, intermediate operations, and final measurements. Further details about the tester formalism can be found in~\cite{Bavaresco_2021} and \footnote{See Supplementary Information}.

\begin{figure*}[t]
    \centering
    \includegraphics[width=.95\linewidth]{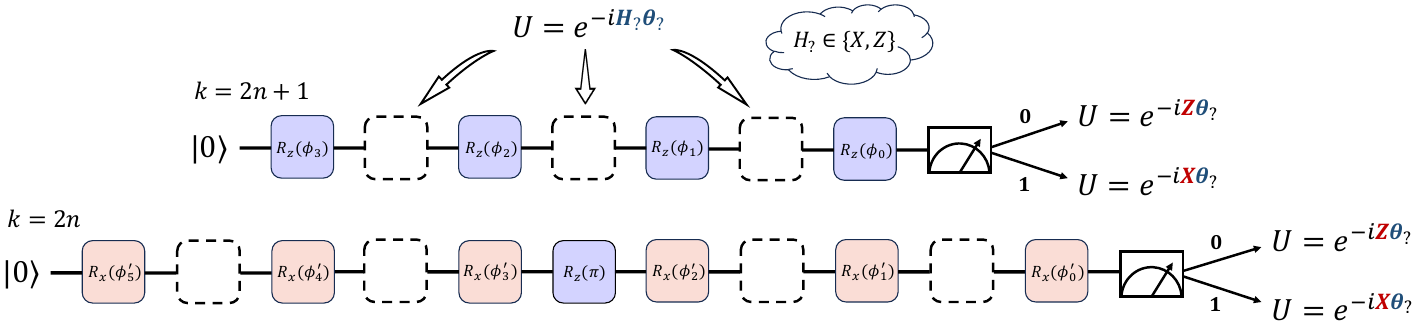}
    \caption{Optimal protocols for the recognition of $\{X,Z\}$. The protocols consist of rotation gates around the $x$-axis (in pink) and the $z$-axis (in purple). The top circuit illustrates the protocol for an odd number of queries, while the bottom circuit represents the protocol for an even number of queries.}
    \label{fig:figure1}
\end{figure*}

\subsection{Binary Hamiltonian recognition}
We now formally introduce the problem of Hamiltonian recognition. Suppose we have access to a unitary operation $U_H\coloneqq\exp(-iH\theta)$ generated by an unknown Hamiltonian $H$, where $H$ belongs to a given known set $\{H_0, H_1, \cdots, H_{m-1}\}$ and $\theta$ is an unknown parameter. The task is to identify the generating Hamiltonian $H$ through multiple queries of the unknown evolution, without knowing the parameter $\theta$. We aim to develop a protocol that maximizes the average success probability, taking over all possible Hamiltonians in the given set and all $\theta$ within a specified range. To formalize this problem mathematically, we introduce the optimal average success probability of Hamiltonian recognition as follows.

\begin{definition} (Hamiltonian Recognition)
For a set of $m$ Hamiltonians $\{H_j\}_{j=0}^{m-1}$, let $p_j$ be the prior probability that $H_j$ is given where $\sum_{j=0}^{m-1} p_j = 1$, and denote $U_{H_j}(\theta_j) = e^{-i H_j \theta_j}$ as the unitary operation generated by $H_j$ where $\theta_j \in [0, t]$. The optimal average success probability of recognizing $\{H_j\}$ by $k$ queries with strategy $\cS$ is defined as
\begin{equation*}
\begin{aligned}
    {\rm Suc}_k^{\cS}(\{H_j\}, \{p_j\}, t) &= \max_{\{T_j\}} \frac{1}{t^m}\int_{0}^{t} \cdots \int_{0}^{t} {\rm d} \theta_0 \cdots {\rm d} \theta_{m-1}\\
    &\sum_{j=0}^{m-1} p_j \tr\Big(T_j\dketbra{U_{H_j}(\theta_j)}{U_{H_j}(\theta_j)}^{\ox k}\Big),
\end{aligned}
\end{equation*}
where the maximization ranges over all $k$-slot testers $\{T_j\}_{j=0}^{m-1} \in \cT^{\cS}$ and $\cS\in \{\texttt{PAR}, \texttt{SEQ}, \texttt{GEN}\}$.
\end{definition}
Physically, this definition corresponds to a strategy where an entangled input state evolves under $k$ uses of the unitary and is measured by POVM elements $M_j$. Consequently, the element of a tester $\{T_j\}$ mathematically encapsulates the entire setup, representing both the specific input state and the POVM measurement~\cite{Chiribella_2008tester}. See Supplementary Information for detailed characterizations of parallel, sequential, and general testers, which describe the different methods for querying unknown dynamics.

The integration over the parameter $\theta$ corresponds to assuming a uniform prior distribution, where all possible values are considered equally likely for each type of Hamiltonian. This ensures the quantum tester $\{T_j\}$ is independent of the unknown parameter.

For $k$ copies of a unitary operation, the average Choi operator of unitary operations with different parameters $\theta$ can be written as 
\begin{equation}
\Omega^{(k)}_{H}(t) = \frac{1}{t}\int_0^{t} \dketbra{U_H(\theta)}{U_H(\theta)}^{\ox k} \mathrm{d} \theta,
\end{equation}
which is also called the \textit{performance operator}~\cite{Chiribella_2016}. We can then rewrite ${\rm Suc}_k^{\cS}(\{H_j\}, \{p_j\}, t)$ as
\begin{equation}\label{Eq:opt_suc_pri}
    {\rm Suc}_k^{\cS}\big(\{H_j\}, \{p_j\}, t\big) = \max_{\{T_j\}} \sum_{j=0}^{m-1} p_j \tr\Big(T_j\Omega_{H_j}^{(k)}(t)\Big).
\end{equation}

For the case of $m=2$, we start with binary Hamiltonian recognition, where the unknown Hamiltonian $H$ is selected from the set $\{H_0, H_1\}$. Let $\vec{n} \coloneqq (n_x, n_y, n_z)$ be a real unit vector in three dimensions and $H = \vec{n}\cdot \vec{\sigma}$ be a single-qubit Hamiltonian where $\vec{\sigma} = (X, Y, Z)$, and $X = \left(\begin{array}{cc}
0 & 1 \\
1 & 0
\end{array}\right),~
Y = \left(\begin{array}{cc}
0 & -i \\
i & 0
\end{array}\right),~
Z = \left(\begin{array}{cc}
1 & 0 \\
0 & -1
\end{array}\right)$ are single-qubit Pauli operators. Then $e^{-i \theta \vec{n}\cdot \vec{\sigma}}$ is a rotation by $\theta$ about the $\vec{n}$ axis. Note that we can always normalize Hamiltonians first and absorb a normalization factor into the parameter $\theta$. Our first main result characterizes the optimal binary Hamiltonians recognition of $H_0$ and $H_1$, whose $\vec{n}_0$ and $\vec{n}_1$ are orthogonal, using $k$ queries to the unknown qubit evolution.

\begin{figure*}[t]
    \centering
    \includegraphics[width=1\linewidth]{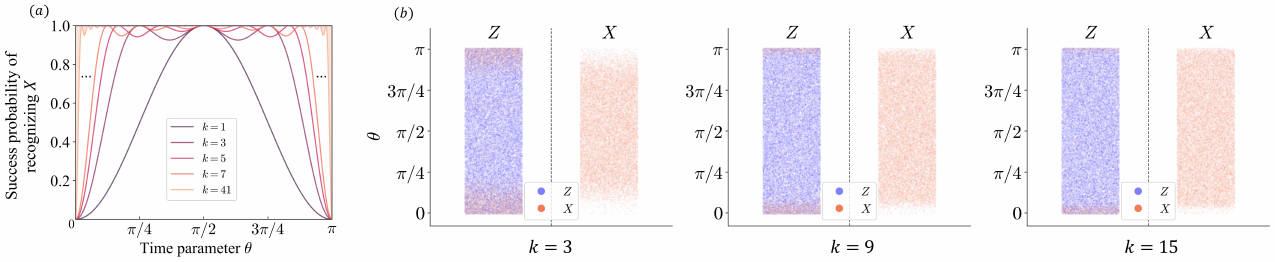}
    \caption{Performance of binary Hamiltonian recognition for ${X, Z}$. (a) The success probability of recognizing $X$ by querying an unknown unitary $k$ times. The $x$-axis represents the parameter $\theta$ for the unknown unitary, and the $y$-axis corresponds to the success probability of recognizing $X$. (b) Experiment results of the recognition protocol on a surface-13 tunable coupling superconducting quantum processor with different numbers of queries to the unknown unitary. In each panel, the $y$-axis corresponds to $\theta$ for $e^{-iH\theta}$, where the blue points represent unitaries generated by the Hamiltonian $H=Z$ and the orange points represent unitaries generated by $H=X$. For each value of $\theta$, 512 points of each color are shown, corresponding to 1024 repeated experiments in total. The left region corresponds to unitaries generated by $Z$, and the right region to those generated by $X$. Increasing the number of queries reduces the misclassification of points between the two regions.}
    \label{fig:bi_Xerr}
\end{figure*}

\begin{theorem}\label{thm:recog_XZ}
For $H_0 = \vec{n}_0\cdot \vec{\sigma}$ and $H_1 = \vec{n}_1\cdot \vec{\sigma}$ with prior probabilities $p$ and $1-p$, respectively, if $\vec{n}_0 \perp \vec{n}_1$, the optimal average success probability of recognizing $\{H_0,H_1\}$ using $k$ queries is given by
\begin{equation*}
{\rm Suc}_k^{\texttt{GEN}}\left(\{H_0, H_1\},  \{ p, 1-p\}, \pi \right) = \frac{k+\max \{p,1-p\}}{k+1}.
\end{equation*}
This limit is globally optimal, holding for all $k$-query strategies, and is saturated by a sequential protocol without entanglement.
\end{theorem}

We first prove the result for $H_0 = X$ and $H_1 = Z$ and reduce the general cases to this one. Notice that these rotation gates are periodic with respect to $\theta$, so we can consider the case where $\theta\in(0,\pi]$. To recognize a Hamiltonian by querying the underlying unitary operation, intuitively, we can choose the probe and measurement basis as its eigenstate, e.g., $\ket{0}$ for $\exp(-iZ\theta)$, to make a perfect recognition. Then would the error of recognizing $\exp(-iX\theta)$ be some functions of $\theta$? Following this intuition, we construct a recognition protocol utilizing the framework of QSP~\cite{Low_2016,low_optimal_2016} which has been widely utilized for the unification of well-known quantum algorithms~\cite{gilyen_quantum_2018, Martyn_2021}, the development of new quantum protocols~\cite{yu_power_2022,Wang_2023,low2019hamiltonian,lloyd2021hamiltonian,martyn2022efficient,childs2018toward,rall2021faster, gur2021sublinear,li2019quantum,subramanian2021quantum,tong2021fast,gribling2024an, odake2024higherorder}. A representative QSP protocol involves applying a gate sequence of the form
\begin{equation}
e^{iZ\phi_0} e^{iX\theta}e^{iZ\phi_1} e^{iX\theta}e^{iZ\phi_2} \cdot \ldots \cdot e^{iX\theta}e^{iZ\phi_k},
\end{equation}
where $e^{iX\theta}$ is referred to as the signal unitary with an unknown parameter $\theta$, and the angles $\phi_0, \phi_1, \ldots, \phi_k$ are controllable and referred to as the \textit{QSP phase sequence}.

In specific, Fig.~\ref{fig:figure1} illustrates the protocols for querying odd and even times. The odd case is described as follows, and the details of the protocol and proof can be found in the Supplementary Information. We explicitly chose $k+1$ angles $\phi_j,j=0,1,...,k$. Given an unknown unitary $e^{-iH\theta}$, we sequentially apply $e^{iZ\phi_k} e^{-iH\theta} e^{iZ\phi_{k-1}} e^{-iH\theta}\cdots e^{iZ\phi_1} e^{-iH\theta} e^{iZ\phi_0}$ to the zero state. Then we measure the state on a computational basis. If outcome is $0$, we decide $H=Z$; if outcome is $1$, we decide $H=X$. Denote the composed unitary as
\begin{equation}
    {\rm QSP}_{H,k}(\theta) \coloneqq e^{iZ\phi_0} \prod_{j=1}^k e^{-iH\theta}  e^{iZ\phi_j},
\end{equation}
It is easy to see that the outcome is always $0$ if $U_Z(\theta)$ is given, and the error of determining $H=Z$ when $X$ is given by $f_k(\theta) \coloneqq |\bra{0}{\rm QSP}_{X,k}(\theta)\ket{0}|^2$. To this end, we analyze what polynomial functions $f_k(\theta)$ can be realized by a QSP phase sequence $\vec{\phi}=(\phi_0,\phi_1,\cdots,\phi_k)$ to study the average recognition error. We show that there exists a QSP phase sequence that achieves the following polynomial, which is also related to the threshold function~\cite{gilyen_quantum_2018,martyn_grand_2021}.
\begin{equation}\label{Eq:threhold_func}
    f_k(\theta) = \sum_{l=-k}^{k} \frac{k-|l|+1}{(k+1)^2} e^{2il\theta}.
\end{equation}
This yields an average error $1/(k+1)$ for $X$ after averaging over all $\theta\in (0,\pi]$. Overall, the average success probability over $X,Z$ and $\theta$ is $(k+\max \{p,1-p\})/(k+1)$ where $p$ and $1-p$ are prior probability of the Hamiltonian being $X$ and $Z$, respectively.

Notably, we prove that the protocol constructed via QSP achieves optimality in this Hamiltonian recognition task. To understand the ultimate limit of all possible protocols for any number of queries, we delve into the SDP formalism of the problem. Followed by Eq.~\eqref{Eq:opt_suc_pri}, the optimal average success probability of recognizing $\{X,Z\}$ can be written as
\begin{equation}\label{Eq:opt_suc_pri_XZ}
    \max_{\{T_0, T_1\}\in \cT^{\texttt{SEQ}}}\; \tr\Big(pT_0 \Omega_{X}^{(k)}(\pi) + (1-p) T_1 \Omega_{Z}^{(k)}(\pi)\Big).
\end{equation}
Drawing upon the method introduced in Ref.~\cite{Bavaresco_2021} for deriving the dual problem by characterizing dual affine spaces, we formulate the dual problem of Eq.~\eqref{Eq:opt_suc_pri_XZ} as follows, ensuring that strong duality is satisfied~\cite{boyd2004convex}.
\begin{equation}\label{sdp:XZ}
\begin{aligned}
    {\rm Suc}_k^{\texttt{SEQ}}\left(\{X, Z\}, \{p,1-p\},\pi \right) 
    = \min &\;\; \lambda \\
    {\rm s.t.} &\;\; \Omega_{X}^{(k)}(\pi) \leq \frac{1}{p}\lambda \overline{W},\\
    &\;\; \Omega_{Z}^{(k)}(\pi) \leq \frac{1}{1-p}\lambda\overline{W},\\
    &\;\; \overline{W} \in \overline{\cW}^{\texttt{SEQ}}.
\end{aligned}
\end{equation}
We analytically construct feasible solutions $\{\lambda_k, \overline{W}_k\}$ for any $k$, by using group representation theory to showcase the matrix inequality in this SDP problem. Then, it can be shown that the optimal average success probability is upper bounded by $(2k+1)/(2k+2)$. Furthermore, we show that general strategies, such as those involving the quantum switch~\cite{Oreshkov_2012,Chiribella2013}, cannot improve this upper bound. Our QSP-based protocol saturates this upper bound and thus achieves the optimal average success probability considering any query model and input probes. Detailed proofs can be found in the Supplementary Information.

Theorem~\ref{thm:recog_XZ} has several remarkable implications. First, it demonstrates that, given $k$ uses of the unknown evolution, the optimal average recognition error decays efficiently as $O(1/k)$. Notably, the optimal protocol exhibits no error when the unknown Hamiltonian is $Z$, while the error in deciding the unknown Hamiltonian is $Z$ when it is actually $X$ is $1/(k+1)$. This result also provides an operational interpretation of the polynomial in Eq.~\eqref{Eq:threhold_func}, generated by a $ZXZ$-type QSP.

In Fig.~\ref{fig:bi_Xerr} (a), we illustrate the success probability of determining $X$ for different $k$, where the $x$-axis represents the parameter $\theta$ for $e^{-iX\theta}$ and the $y$-axis corresponds to the polynomial $1-f_k(\theta)$ generated by the QSP circuit. The shaded area corresponds to the average error and vanishes very quickly. We further note that the angles $\theta_j= \frac{j\pi}{k+1}, j=1,2,\cdots, k$ are the zeros of the polynomial $1-f_k(\theta)$, which implies that the error probability of recognizing $X$ vanishes at these points. Consequently, the two sets of unitary operations $\{e^{-iX\theta_j}\}_{j=1}^k$ and $\{e^{-iZ\theta_j}\}_{j=1}^k$ can be \textit{perfectly} distinguished via our protocol.
\begin{corollary}\label{cor:XZ_perfect_dis}
    For any $k\in \NN^+$, there exist two disjoint sets of qubit unitary operations $\Theta_0\coloneqq\{U_j\}_{j=0}^{k-1}$ and $\Theta_1\coloneqq\{V_j\}_{j=0}^{k-1}$ such that $\Theta_0\cap \Theta_1 = \emptyset$, which can be perfectly distinguished by querying the unknown unitary adaptively $k$ times.
\end{corollary}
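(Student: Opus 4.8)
The plan is to extract the two sets directly from the optimal protocol underlying Theorem~\ref{thm:recog_XZ}, specialized to $H_0=X$, $H_1=Z$. First I would recall from the proof of that theorem (Appendix~\ref{appendix:proof_XZ}) the explicit $ZXZ$-type QSP circuit with $k$ signal slots whose error function is the polynomial of Eq.~\eqref{Eq:threhold_func}; writing it in closed form,
\begin{equation*}
f_k(\theta)=\sum_{l=-k}^{k}\frac{k-|l|+1}{(k+1)^2}\,e^{2il\theta}=\frac{1}{(k+1)^2}\,\frac{\sin^2\!\big((k+1)\theta\big)}{\sin^2\theta},
\end{equation*}
i.e. the Fej\'er kernel in the variable $2\theta$. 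Here $f_k(\theta)=|\bra{0}\mathrm{QSP}_{X,k}(\theta)\ket{0}|^2$ is the probability that the protocol outputs ``$0$'' on input $e^{-iX\theta}$, while on input $e^{-iZ\theta}$ it outputs ``$0$'' with certainty for every $\theta$ (all $e^{-iZ\theta}$ and $e^{iZ\phi_l}$ commute and are diagonal, so $\ket{0}$ is preserved up to phase). The zeros of $f_k$ in $(0,\pi)$ are exactly the $k$ distinct points $\theta_j=\tfrac{j\pi}{k+1}$, $j=1,\dots,k$, where $\sin\!\big((k+1)\theta\big)=0$ but $\sin\theta\neq0$.

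Next I would set $\Theta_1\coloneqq\{e^{-iZ\theta_j}\}_{j=1}^{k}$ and $\Theta_0\coloneqq\{e^{-iX\theta_j}\}_{j=1}^{k}$ (re-indexed as $\{U_j\}_{j=0}^{k-1}$, $\{V_j\}_{j=0}^{k-1}$) and verify three routine points. (i) Each set has $k$ distinct elements: the $\theta_j$ are distinct and lie in $(0,\pi)$, on which both $\theta\mapsto e^{-iZ\theta}$ and $\theta\mapsto e^{-iX\theta}$ are injective (e.g. via the $(0,0)$-entry $e^{-i\theta_j}$, resp. $\cos\theta_j$). (ii) $\Theta_0\cap\Theta_1=\emptyset$: every element of $\Theta_1$ is diagonal, whereas every element of $\Theta_0$ has off-diagonal entries $-i\sin\theta_j\neq0$ since $\theta_j\in(0,\pi)$. (iii) The $k$-query adaptive QSP protocol perfectly distinguishes $\Theta_0$ from $\Theta_1$: on $e^{-iZ\theta_j}\in\Theta_1$ it outputs ``$0$'' with probability $1$, hence correctly decides $H=Z$; on $e^{-iX\theta_j}\in\Theta_0$ the error of outputting ``$0$'' is $f_k(\theta_j)=0$, so it outputs ``$1$'' with probability $1$ and correctly decides $H=X$. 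Since the circuit acts on a single qubit with only $Z$-rotations interleaved between the $k$ uses of the unknown unitary, this is a sequential (adaptive) strategy using neither an ancilla nor entanglement.

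As an addendum I would observe that, although $\Theta_0$ and $\Theta_1$ are disjoint and perfectly distinguishable, many cross pairs are non-orthogonal: $\tr\!\big((e^{-iX\theta_j})^{\dagger}e^{-iZ\theta_l}\big)=2\cos\theta_j\cos\theta_l$, which is nonzero whenever $\theta_j,\theta_l\neq\pi/2$. One may also lift the example from $\{X,Z\}$ to any orthogonal pair $\vec n_0\perp\vec n_1$ by conjugating $\Theta_0,\Theta_1$ with the fixed single-qubit unitary $V$ sending $X\mapsto\vec n_0\cdot\vec\sigma$ and $Z\mapsto\vec n_1\cdot\vec\sigma$ and pre-/post-composing the protocol with $V^{\dagger},V$, which preserves both disjointness and perfect discrimination.

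I do not anticipate a genuine obstacle: the statement is essentially a re-reading of the zero set of the already-constructed polynomial $f_k$. The only things needing care are the bookkeeping in (i)--(iii), confirming that the protocol indeed has zero error on the $Z$ side for the relevant parity of $k$ (immediate from the explicit circuits in Fig.~\ref{fig:XZ_QSP}), and checking that ``adaptively $k$ times'' matches the $k$ signal slots of the QSP sequence.
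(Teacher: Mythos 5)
Your proposal is correct and follows essentially the same route as the paper: the sets are read off from the zeros $\theta_j=j\pi/(k+1)$ of the Fej\'er-kernel error polynomial $f_k$, at which the QSP protocol of Theorem~\ref{thm:recog_XZ} makes no error on either hypothesis, and your extra bookkeeping (distinctness, disjointness, the $2\cos\theta_j\cos\theta_l$ overlap computation) is all sound. The one caveat you flag at the end is real but resolves as you anticipate: for even $k$ the $ZXZ$ phase sequence of Lemma~\ref{lem:err_poly} does not exist by parity, and the paper's even-$k$ circuit instead has zero error on the $X$ side with $f_k(\theta)$ as the error on the $Z$ side, so point (iii) must be stated with the roles of $X$ and $Z$ interchanged — the same zero set $\{\theta_j\}$ still yields perfect discrimination.
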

When $k$ is even, we can construct $\Theta_0$ and $\Theta_1$ such that any pair of unitaries, one from each set, are not orthogonal. This simple example extends our understanding of unitary discrimination, going beyond the previous knowledge that any two unitary operations can be perfectly distinguished with~\cite{Ariano2001,Acin2001}, or without using entanglement~\cite{Duan2007}. We now provide an example that even two sets of unitary operations, whose elements could be highly mutually nonorthogonal, can be perfectly distinguished with a finite number of uses, without the need for entanglement.

Further, our result nicely relates the QSP to the problem of quantum Hamiltonian recognition. Our protocol, consisting of a sequence of rotation gates, is easy to implement and achieves the optimal success probability. Moreover, the protocol is independent of the unitary evolution time. To the best of our knowledge, it is the first concrete optimal protocol for composite hypothesis testing of unitary operations. Notably, as the optimal protocol requires only single-qubit quantum gates, we establish that an entangled probe provides no advantage in this task.

\begin{figure*}[t]
\centering
    \includegraphics[width=1\linewidth]{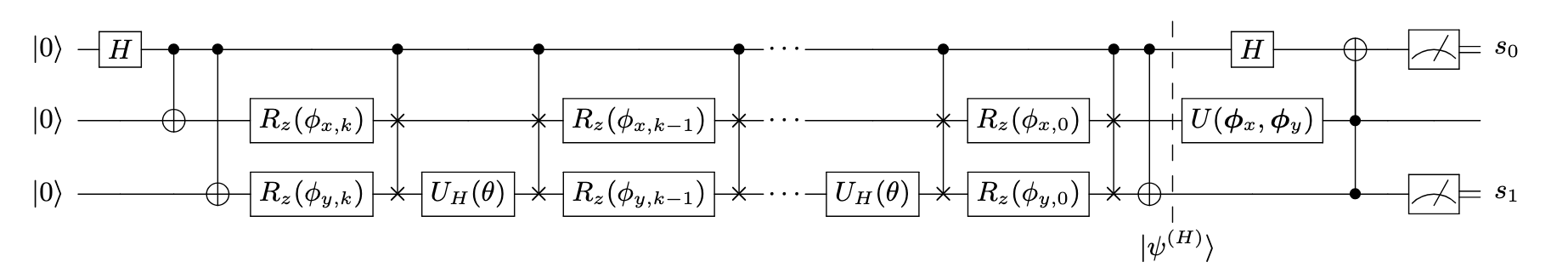}
\caption{The three-qubit quantum circuit for ternary Hamiltonian recognition of $\{X,Y,Z\}$. The input state is $\ket{000}$. If the measurement outcome is $s_0s_1 = 00$ or $10$, determine $H=Z$. If the measurement outcome is $s_0s_1 = 01$, determine $H=Y$. If the outcome is $s_0s_1 = 11$, determine $H=X$.}
\label{fig:XYZ_qsp}
\end{figure*}

To validate our theoretical result, we implemented the proposed protocol on a surface-13 tunable coupling superconducting quantum processor~\cite{Bu2024}. For this processor~\cite{Bu2024}, the median single-qubit gate fidelity shows a tiny decrease from about 99.95\% for the isolated Randomized Benchmarking to 99.94\% for the simultaneous one. 
We uniformly sample 64 values of $\theta_j$ from $(0,\pi]$ to construct two sets of gates, $\{e^{-iX\theta_j}\}_{j=1}^k$ and $\{e^{-iZ\theta_j}\}_{j=1}^k$, corresponding to the Hamiltonians $X$ and $Z$, respectively. For each gate, our protocol was executed 512 times, and we recorded our predictions based on the measurement outcomes. The results are illustrated in Fig.~\ref{fig:bi_Xerr} (b), where blue points represent $e^{-iZ\theta_j}$ and orange points represent $e^{-iX\theta_j}$. For each angle on the $y$-axis, the unitary is placed in the left region if we predict $Z$ and in the right region if we predict $X$. We present our 512 prediction outcomes for each angle and different $k$. The experimental findings highlight the practicality and robustness of our protocol when applied to a real-world quantum computing platform. The implementation details can be found in the Supplementary Information.

\subsection{Ternary Hamiltonian recognition}
Apart from the binary recognition, we further explore the scenario of ternary Hamiltonian recognition, where the unknown Hamiltonian $H$ is selected from a set $\{H_0, H_1, H_2\}$ with prior probabilities $p_0,p_1,p_2$, respectively, where $p_0+p_1+p_2=1$.

\begin{theorem}~\label{thm:recog_XYZ}
For $H_0 = \vec{n}_0\cdot \vec{\sigma},~H_1 = \vec{n}_1\cdot \vec{\sigma}$ and $H_2 = \vec{n}_2\cdot \vec{\sigma}$ with prior probabilities $p_0,p_1,p_2$, respectively, where $p_0+p_1+p_2=1$, if $\vec{n}_0 \perp \vec{n}_1 \perp \vec{n}_2$, the optimal average success probability of recognizing $\{H_0, H_1, H_2\}$ via a $k$-slot sequential strategy, where $k$ is odd, is given by
\begin{equation}
    {\rm Suc}_k^{\texttt{SEQ}}\left(\big\{H_i\big\}_{i=0}^2, \{p_i\}_{i=0}^2, \pi \right) = \frac{k+\max\{p_0,p_1,p_2\}}{k+1}.
\end{equation}
\end{theorem}

The result can also be understood from the specific case of $H_0=X, H_1=Y, H_2=Z$. For achievability, the key idea is to utilize the QSP protocol in Theorem~\ref{thm:recog_XZ} for $\{X, Z\}$ and $\{Y,Z\}$ coherently. We denote the QSP circuit for recognizing $\{X, Z\}$ as
\begin{equation}
    {\rm QSP}^{\{X,Z\}}_{H,k}(\theta) = e^{iZ\phi_{x,0}} \prod_{j=1}^k e^{-iH\theta}  e^{iZ\phi_{x,j}},
\end{equation}
We denote ${\rm QSP}^{\{X,Z\}}_{X,k}(\theta)\ket{0} = P_{x,k}(\theta) \ket{0} + Q_{x,k}(\theta) \ket{1}$ and ${\rm QSP}^{\{X,Z\}}_{X,k}(\theta)\ket{1} = -Q_{x,k}^*(\theta) \ket{0} + P_{x,k}^*(\theta) \ket{1}$ with $f_{k}(\theta) = |P_{x,k}(\theta)|^2$. Similarly, we denote the QSP circuit for recognizing $\{Y, Z\}$ as
\begin{equation}
    {\rm QSP}^{\{Y,Z\}}_{H,k}(\theta) = e^{iZ\phi_{y,0}} \prod_{j=1}^k e^{-iH\theta}  e^{iZ\phi_{y,j}},
\end{equation}
and ${\rm QSP}^{\{Y,Z\}}_{Y,k}(\theta)\ket{0} = P_{y,k}(\theta) \ket{0} + Q_{y,k}(\theta) \ket{1}$. In particular, we could set $Q_{y,k}(\theta) = iQ_{x,k}^*(\theta)$. 

Based on the above, we developed a quantum algorithm that can simultaneously realize the polynomial $f_k(\theta)$ regardless of signal unitary is $e^{-iX\theta}$ or $e^{-iY\theta}$. The quantum circuit of the algorithm is depicted in Fig.~\ref{fig:XYZ_qsp}. We input the initial state $\ket{000}$ into the circuit, apply the unknown unitary $U_H$ interleaved with other predetermined gates, and measure the first and third qubits of the output state on a computational basis. Denote the measurement outcome as $s=s_0s_1$. If $s=00$ or $10$, decide $H = Z$. If $s=01$, decide $H = Y$. If $s=11$, decide $H = X$.

If $H=Z$, again, there is no error as we will always have $s_1 = 0$. If $H = X$, we focus on the state $\ket{\psi^{(X)}}$ shown in Fig.~\ref{fig:XYZ_qsp}. The branch whose third qubit is in $\ket{1}$ can be written as
\begin{equation}
\begin{aligned}
    &\left(H\ox \idop \ox \ketbra{1}{1}\right) \ket{\psi^{(X)}} \\
    =&\; Q_{x,k}^* (\theta) \Bigg[\ket{0} \ox \left(\frac{e^{-i\boldsymbol{\phi}_x/2}}{2} \ket{0} - \frac{e^{i\boldsymbol{\phi}_y/2}}{2}\ket{1} \right)\ox \ket{1}\\
    &\qquad + \ket{1} \ox \left(\frac{e^{-i\boldsymbol{\phi}_x/2}}{2} \ket{0} + \frac{e^{i\boldsymbol{\phi}_y/2}}{2}\ket{1} \right)\ox \ket{1}\Bigg],
\end{aligned}
\end{equation}
where $\boldsymbol{\phi}_l \coloneqq \sum_j \phi_{l,j},l\in\{x,y\}$. Choosing
\begin{equation}
U(\boldsymbol{\phi}_x, \boldsymbol{\phi}_y) = \frac{1}{\sqrt{2}}\left(
\begin{array}{cc}
   e^{i\boldsymbol{\phi}_y/2}  & e^{-i\boldsymbol{\phi}_x/2} \\
   e^{i\boldsymbol{\phi}_x/2}  & -e^{-i\boldsymbol{\phi}_y/2}
\end{array}\right),
\end{equation}
the final state of the circuit, given that the third qubit is in $\ket{1}$, is 
\begin{equation*}
    Q_{x,k}^*(\theta) \Bigg[\frac{1}{\sqrt{2}} \ket{111} + \frac{e^{-i(\boldsymbol{\phi}_x - \boldsymbol{\phi}_y)/2}}{\sqrt{2}} \ket{101}\Bigg].
\end{equation*}
The success probability of recognizing $X$ (measured `11') is given by $|Q_{x,k}^*|^2$. Similar results could be obtained for the case $H=Y$. After calculation for each case, we can arrive at an average success probability $(3k+1)/(3k+3)$. Further, the optimality of our protocol among all possible sequential or general strategies is proved via the dual SDP formulation. All detailed proofs can be found in the Supplementary Information.

We note that different from directly using two QSP protocols coherently, which would stop at $\ket{\psi^{(H)}}$ in the circuit and perform a measurement, our key construction after $\ket{\psi^{(H)}}$ amplifies the success probability of recognizing $X$ (or $Y$) from $|Q_{x,k}^*|^2/2$ to $|Q_{x,k}^*|^2$. Because the error polynomials w.r.t $\theta$ for recognizing $X$ and $Y$ are the same, i.e., $1-f_k(\theta)$, similar to Corollary~\ref{cor:XZ_perfect_dis}, we can construct three pairwise disjoint sets of $k$ unitary operations that can be perfectly distinguished by querying $k$ times via our protocol.

\begin{figure}[t]
    \centering
    \includegraphics[width=1\linewidth]{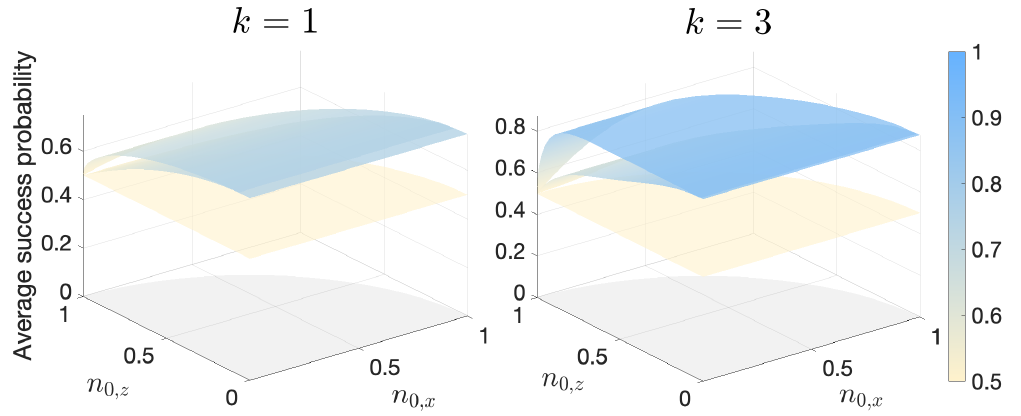}
    \caption{Average success probability of recognizing $\{H_0, Z\}$. The plots illustrate the performance for query counts $k=1$ (left) and $k=3$ (right) as a function of the vector components $n_{0,x}$ and $n_{0,z}$ of the Hamiltonian $H_0$. In each panel, three performance layers are compared: the top surface represents the globally optimal success probability derived via semidefinite programming (SDP); the middle surface depicts the performance of the proposed ${X, Z}$ QSP protocol applied to this general case; and the bottom surface represents the random guessing baseline.}
    \label{fig:figure4}
\end{figure}

\subsection{General axis directions and multi-qubit Hamiltonian}
Now we consider a more general scenario where $\vec{n}_0$ and $\vec{n}_1$ are not orthogonal for $H_0 = \vec{n}_0\cdot\vec{\sigma}$ and $H_1 = \vec{n}_1\cdot\vec{\sigma}$. Without loss of generality, we set $H_1 = Z$. Since $\vec{n}_0\coloneqq(n_{0,x},n_{0,y},n_{0,z})$ is a unit vector, we can find a unitary $U$ that diagonalize $H_0$ such that $UH_0U^\dagger = Z$. Therefore, for fixed $k$ and $\vec{n}_0$, we could obtain $\Omega_{H_0}^{(k)}(\pi) = (U\ox U^*)\Omega_{Z}^{(k)}(\pi)(U^\dagger \ox U^T)$. Further, we numerically solve SDP~\eqref{sdp:XZ} to estimate ${\rm Suc}_k^{\texttt{SEQ}}\left(\big\{H_0, Z\big\}, \pi \right)$ for varying $n_{0,x}$ and $n_{0,z}$.

We illustrate the average success probability of recognizing $\{H_0, Z\}$ in Fig.~\ref{fig:figure4}, for the cases of $k=1$ and $k=3$. In each subplot, the top layer represents the average success probability corresponding to the optimal protocol obtained by solving the SDP; the middle layer depicts the average success probability given by the protocol in Theorem~\ref{thm:recog_XZ} specifically designed for $\{X, Z\}$, and the bottom layer shows the performance of random guessing. It can be observed that within a certain range, the $\{X, Z\}$ protocol performs well. As the number of slots increases, the optimal protocol exhibits high accuracy over a larger range.

Furthermore, we consider a more physically motivated recognition task. We aim to distinguish between two equiprobable Heisenberg-type Hamiltonians, $\hat{H}_0$ and $\hat{H}_1$, of the form 
\begin{equation}
H(J_x,J_y) = \sum_{j=1}^N J_x X_j X_{j+1} + J_y Y_j Y_{j+1}    
\end{equation}
with real-valued coupling constants $J_x, J_y$. For each instance of the problem, we define the pair to be distinguished as $\hat{H}_0 = H(p,p)$ and $\hat{H}_1 = H(-p,-p)$ for a given parameter $p$.

We approach this problem using a variational method. The $k$-query recognition protocol is modeled by a parameterized quantum circuit~\cite{mo2025parameterized}, $C_b(\bm \phi, \theta) = U(\bm \phi^{(k)})\prod_{i=1}^k e^{-i\hat{H}_b\theta_i} U(\bm \phi^{(i-1)})$, where the evolutions under the unknown Hamiltonian are interleaved with trainable universal two-qubit ansatzes $U(\bm \phi^{(i)})$ when $N=2$ (universal three-qubit ansatzes when $N=3$). The set of all gate parameters, ${\bm \phi^{(0)}, ..., \bm \phi^{(k)}}$, constitutes the variational parameters $\bm \phi$ of our protocol. The optimal protocol is found by minimizing the average failure probability. Since the evolution times $\theta_i$ are unknown parameters, we optimize for the averaged scenario, with a cost function given by
\begin{equation*}
\cL(\bm \phi) = \frac{1}{2M^2}\sum_{i=1}^M\sum_{j=1}^M \left( \big| \bra{1}C_0(\bm \phi, \theta_i)\ket{0} \big|^2 + \big| \bra{0}C_1(\bm \phi, \theta_j)\ket{0} \big|^2 \right)
\end{equation*}
where ${\theta_i}$ and ${\theta_j}$ are sets of $M$ evolution times sampled uniformly from a range $[0, t]$.
We use a gradient-based optimizer, e.g., Adam, to find the optimal parameters $\bm \phi^* = \arg\min_{\bm \phi} \cL(\bm \phi)$.

Fig.~\ref{fig:heisen_2_3qubit} illustrates the performance of our optimized protocol for recognizing $\{\hat{H}_0, \hat{H}_1\}$ with $N=2$ and $N=3$, where we choose $t=\pi, M=3142$ during our optimization. We consistently observe that the success probability increases with the number of queries, $k$, for all instances of Hamiltonian pairs. This provides numerical evidence that effective protocols for recognizing different Heisenberg Hamiltonians exist. However, analog from single-qubit QSP to its multi-qubit extensions~\cite{gilyen_quantum_2018, Wang_2023}, it is non-trivial to extend our single-qubit results to high-dimensional cases. We expect more systematic efforts in future work to establish formal guarantees for multi-qubit and many-body systems.

\begin{figure}[t]
    \centering
    \includegraphics[width=1\linewidth]{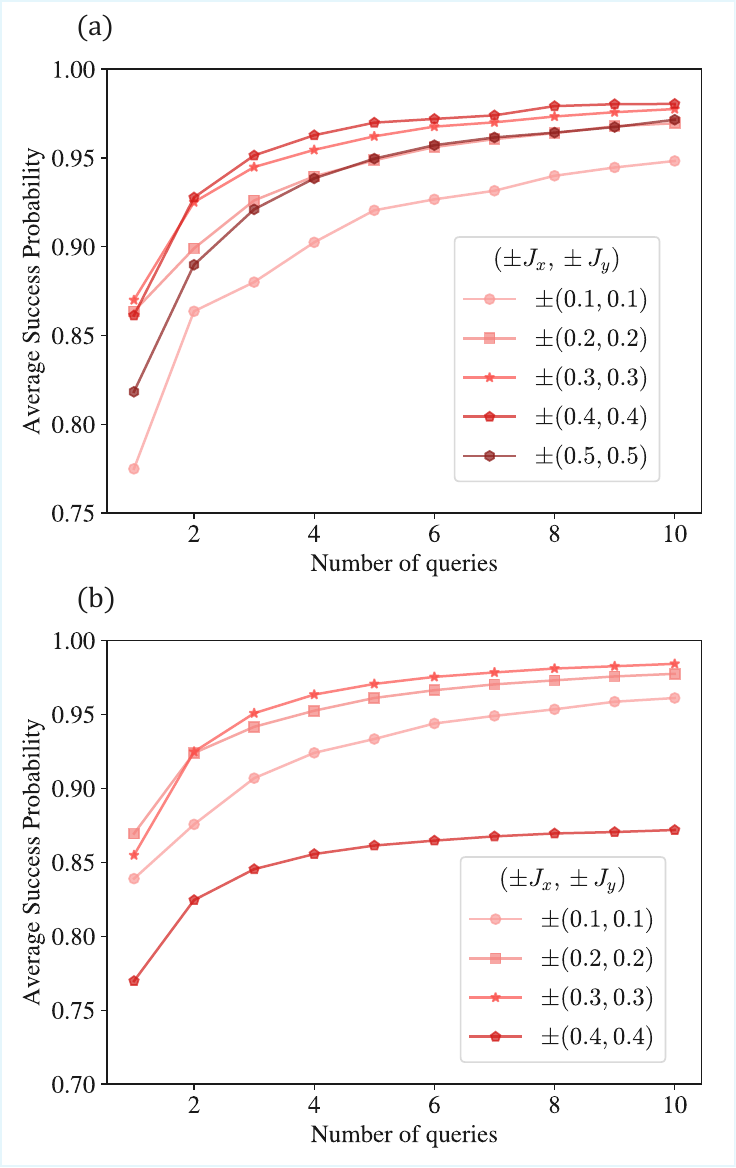}
    \caption{Average success probability for recognizing (a) two-qubit Hamiltonians $J_x X_1 X_2 + J_y Y_1 Y_2$ and (b) three-qubit Hamiltonians $J_x (X_1 X_2+X_2X_3) + J_y (Y_1 Y_2 + Y_2 Y_3)$. Each curve represents a different pair of Hamiltonians $\{\hat{H}_0, \hat{H}_1\}$, indexed by the parameter $(J_x,J_y)$ and $(-J_x, -J_y)$. The $x$-axis represents the number of queries to the unknown unitary, and the $y$-axis represents the optimized average success probability. It shows that the average success probability of Hamiltonian recognition increases with the number of queries, $k$, for all instances of Hamiltonian pairs considered.}
    \label{fig:heisen_2_3qubit}
\end{figure}

\section{Discussion}
In summary, we introduced Hamiltonian recognition, bridging composite hypothesis testing and quantum metrology. For binary recognition of orthogonal single-qubit Hamiltonians, our QSP-based protocol achieves an optimal $O(1/k)$ average error, where $k$ is the number of queries. Remarkably, we showed that two disjoint sets of unitaries can be perfectly discriminated with finite adaptive queries without entanglement. We also developed an optimal quantum algorithm for ternary recognition. 
By identifying the Hamiltonian from a given class, Hamiltonian recognition provides valuable insights into the quantum system's properties and behavior, serving as a foundation for further analysis or as a complementary task to the more comprehensive goal of Hamiltonian learning~\cite{Gebhart2023,Li2024,Haah2024}, quantum property hypothesis testing of quantum dynamics~\cite{Chen2024b}, and quantum device verification~\cite{Eisert_2020}.

While motivated by breaking down a multi-parameter metrology task into the Hamiltonian recognition and single-parameter metrology, it remains challenging and intrinsic to develop protocols for the recognition of multiple non-orthogonal Hamiltonians and incorporating general many-body Hamiltonians~\cite{Holz2015}. Furthermore, while our analysis focuses on the limits of this task in the query model, a complete assessment of its practical feasibility requires further investigation, including investigation on the circuit depth, gate counts, the effects of realistic noise, and hardware connectivity, which are critical for scaling to larger systems. Furthermore, our investigation of distinguishing between two equiprobable multi-qubit Heisenberg-type Hamiltonians indicates the possibility that effective recognition protocols exist for larger quantum systems, which may lead to new quantum algorithms. Our work may also inspire extensions and applications of QSP from both algorithmic and information-processing aspects.

\section*{Acknowledgments}
We acknowledge insightful discussions with Yuxiang Yang, Qiushi Liu, and Zhenhuan Liu, and helpful comments from Benchi Zhao and Guangxi Li. This work was partially supported by the National Key R\&D Program of China (Grant No.~2024YFB4504004), the National Natural Science Foundation of China (Grant No.~12447107), the Guangdong Provincial Quantum Science Strategic Initiative (Grant No.~GDZX2403008, GDZX2303007), the Guangdong Provincial Key Lab of Integrated Communication, Sensing and Computation for Ubiquitous Internet of Things (Grant No.~2023B1212010007), and the Undergraduate Research Project and X Program from HKUST(Guangzhou).

\section*{Data Availability}
The datasets analysed during the current study are available at~\cite{coderepo}.

\section*{Code Availability}
Code for demonstrating binary and tenary Hamiltonian recognition algorithms and related real experiments is available at~\cite{coderepo}. Other code used in this study is available from the corresponding authors upon reasonable request.

\section*{Competing Interests}
The authors declare no competing interests.

\section*{Author Contributions}
C.Z. and S.H. contributed equally to this work. C.Z. and S.H. contributed to the main algorithms of this work. C.Z. and Y.-A.C. contributed to the converse bound. S.H. and L.Z. contributed to the numerical simulation of protocols. X.W. formulated the initial idea and advised the project. All co-authors contributed to the results and the preparation of the manuscript.

%

\appendix

\onecolumngrid

\vspace{1cm}
\begin{center}
{\Large \textbf{Supplementary Information}}
\end{center}

\setcounter{subsection}{0}
\setcounter{table}{0}
\setcounter{figure}{0}

\numberwithin{equation}{section}
\renewcommand{\theproposition}{S\arabic{proposition}}
\renewcommand{\thedefinition}{S\arabic{definition}}
\renewcommand{\thefigure}{S\arabic{figure}}
\setcounter{equation}{0}
\setcounter{table}{0}
\setcounter{section}{0}
\setcounter{proposition}{0}
\setcounter{definition}{0}
\setcounter{figure}{0}

In this Supplementary Information, we present detailed proofs of the theorems in the manuscript ``Optimal Hamiltonian recognition of unknown quantum dynamics''. In Section~\ref{appendix:preliminary}, we review the tester formalism used for quantum channel discrimination and some useful results in quantum signal processing. In Section~\ref{appendix:proof_XZ}, we present detailed proofs of Theorem 1 in the manuscript for binary Hamiltonian recognition. In Section~\ref{appendix:proof_XYZ}, we present detailed proofs of Theorem 3 in the manuscript for ternary Hamiltonian recognition. In Section~\ref{appendix:realdevice_exp}, we give a detailed algorithm for the experiments on a superconducting computer. In Section~\ref{appendix:general_axis}, we present numerical results on the recognition of general non-orthogonal Hamiltonians.

\section{Preliminaries}\label{appendix:preliminary}

\subsection{Tester formalism}
We review the tester formalism for quantum channel discrimination based on the {\Choi} isomorphism~\cite{Bavaresco_2022}. A tester is defined as a set of positive semidefinite operators $T = \{T_0, T_1\},  T_i \in \cL^\dagger (\cH_I \otimes \cH_O)$ where we denote by $\cL^\dagger(\cH)$ the set of Hermitian operators acting on a Hilbert space $\cH$. By taking the trace with $J^{\cN}_{I O}$ and $J^{\cM}_{I O}$, it results in a valid probability distribution according to $p(\cN_{I \to O}|J^{\cN}_{I O}) = \tr(T_0 J^{\cN}_{I O})$ and $p(\cM_{I \to O}|J^{\cM}_{I O}) = \tr(T_1 J^{\cM}_{I O})$. Compared to how POVMs act on quantum states, these testers that act on quantum channels can be regarded as a `measurement' of quantum channels. Supposing two channels $\cN$ and $\cM$ are given with prior probability $p$ and $1-p$, respectively, we can denote the optimal success probability of discriminating $\cN$ and $\cM$ as
\begin{equation}~\label{max_pro_suc_dis_eq}
    P = \max_{\{T_0, T_1\}} p \cdot \tr\left(T_0J^{\cN}_{I O}\right) + (1-p) \cdot \tr\left(T_1J^{\cM}_{I O}\right).
\end{equation}
For simplicity of notations, the trace-and-replace operation in $\cH_X$ is denoted as
\begin{equation}
    _{X}(\cdot) \coloneqq \tr_X (\cdot) \otimes \frac{\idop_X}{d_X},
\end{equation}
where $\idop_X$ denotes the identity operator on $\cH_X$ and $d_X$ is the dimension of $\cH_X$.

\paragraph{Parallel strategy.} This strategy involves sending each component of a multipartite state through independent copies of an unknown channel. In this approach, the outputs from these channels do not interact with each other's inputs. These strategies are characterized by parallel $k$-slot testers $T^{\texttt{PAR}} = \{T_0^{\texttt{PAR}}, T_1^{\texttt{PAR}}\}$, $T_i^{\texttt{PAR}} \in \cL^\dagger(\cH_I \ox \cH_O)$. We set $W^{\texttt{PAR}} \coloneqq \sum^1_{i=0}T_i^{\texttt{PAR}}$ and parallel $k$-slot testers are defined as
\begin{align}
    T_i^{\texttt{PAR}} &\geq 0,~ \forall i,\\
    \tr W^{\texttt{PAR}} &= d_O,\\
    W^{\texttt{PAR}} &= _{O}W^{\texttt{PAR}},
\end{align}
where $O\coloneqq O_1 O_2\cdots O_k$ is a composite of $k$ subsystems.

\paragraph{Sequential strategy.} This strategy is also sometimes called adaptive strategy which involves sending a quantum system through a series of channels in sequence. In this approach, the output from each channel can be used as input for the next one. These strategies are characterized by sequential $k$-slot testers $T^{\texttt{SEQ}} = \{T_0^{\texttt{SEQ}}, T_1^{\texttt{SEQ}}\}$, $T_i^{\texttt{SEQ}} \in \cL^\dagger(\cH_I \ox \cH_O)$. We set $W^{\texttt{SEQ}} \coloneqq  \sum^1_{i=0}T_i^{\texttt{SEQ}}$ and sequential $k$-slot testers are defined as
\begin{align}
    T_i^{\texttt{SEQ}} &\geq 0,~ \forall i,\\
    \tr W^{\texttt{SEQ}} &= d_O,\\
    W^{\texttt{SEQ}} &= _{O_k}W^{\texttt{SEQ}},\\
    _{I_k O_k}W^{\texttt{SEQ}} &= _{O_{k-1} I_k O_k}W^{\texttt{SEQ}},\\
    &\dots \notag \\
    _{I_2 O_2 \dots I_k O_k}W^{\texttt{SEQ}} &= _{O_1 I_2 O_2 \dots I_k O_k}W^{\texttt{SEQ}}.
\end{align}

\paragraph{General strategy.} This strategy involves the most general types of operations that transform multiple quantum channels into a joint probability distribution. In this approach, general strategies do not impose any specific order on the channels, allowing for the most flexible approach. These strategies are characterized by general $k$-slot testers $T^{\texttt{GEN}} = \{T_0^{\texttt{GEN}}, T_1^{\texttt{GEN}}\}$, $T_i^{\texttt{GEN}} \in \cL^\dagger(\cH_I \ox \cH_O)$, which satisfy $p(i|J^{\cN}_{I O},J^{\cM}_{I O}) = \tr\big[T_i^{\texttt{GEN}}(J^{\cN}_{I O} \ox  J^{\cM}_{I O})\big]$. We set $W^{\texttt{GEN}} \coloneqq \sum^1_{i=0} T_0^{\texttt{GEN}}$ and general $k$-slot testers are defined as
\begin{align}
    T_i^{\texttt{GEN}} &\geq 0,~ \forall i,\\
    \tr\big[W^{\texttt{GEN}}(J^{\cN}_{I O} \ox J^{\cM}_{I O})\big] &= 1.
\end{align}

For any strategies mentioned above, we denote $\cS \in \{\texttt{PAR}, \texttt{SEQ}, \texttt{GEN}\}$. By Eq.~\eqref{max_pro_suc_dis_eq}, the optimal success probability of discrimination can be expressed as
\begin{equation}
    P^{\cS} = \max_{\{T_0, T_1\}\in \cT^{^{\cS}}} p \cdot \tr\left[T_0 (J^{\cN}_{I O})^ {\ox k}\right] + (1-p)\cdot \tr\left[T_1 (J^{\cM}_{I O})^ {\ox k}\right].
\end{equation}

\subsection{Quantum signal processing}
Quantum signal processing (QSP) is a technique that extends the concepts of composite pulse sequences proposed by Low, Yoder, and Chuang~\cite{low_optimal_2016,Low_2016}. A common QSP structure consists of four key components: (i) the signal unitary $W$, (ii) the QSP phase sequence $\vec{\phi}\coloneqq (\phi_0, \phi_1,\cdots, \phi_k)$, (iii) the signal processing operators $S(\phi_j)$, and (iv) the measurement basis $M$ in which the desired polynomial is obtained. A QSP circuit is constructed by interleaving the signal unitary $W$ with the signal processing operators $S(\phi_j)$, followed by a projective measurement in the basis $M$. The key feature of QSP is that the signal rotation $W$ always rotates through a fixed angle $\theta$, while the angles for the signal processing rotations $S$ can be chosen to fit desired functions of $\theta$. For a comprehensive review of QSP, we refer the reader to Ref.~\cite{Martyn_2021}. A widely studied and employed signal unitary is
\begin{equation}
W_x(a) \coloneqq \left(\begin{array}{cc}
a & i \sqrt{1-a^2} \\
i \sqrt{1-a^2} & a
\end{array}\right) = e^{i \arccos (a) X},
\end{equation}
which is a rotation around $x$-axis by angle $\theta = 2\arccos(a)$. We will also use $W_x(\theta)$ to denote $W_x(a)$. 
The function-fitting capability of a simple QSP structure is characterized by the following theorem.
\begin{theorem}[\cite{gilyen_quantum_2018}, Theorem 4]
There exists a QSP phase sequence $\vec{\phi} \in \mathbb{R}^{d+1}$ such that
\begin{equation}
e^{i Z\phi_0} \prod_{k=1}^d W(a) e^{iZ\phi_k} = \left(
\begin{array}{cc}
P(a) & i Q(a) \sqrt{1-a^2} \\
i Q^*(a) \sqrt{1-a^2} & P^*(a)
\end{array}\right)
\end{equation}
for $a \in[-1,1]$, and a $\vec{\phi}$ exists for any polynomials $P, Q$ in a such that:
\begin{itemize}
    \item[(i)] $\mathrm{deg}(P) \leq k,~\mathrm{deg}(Q) \leq k-1$,
    \item[(ii)] $P$ has parity $k$ mod $2$ and $Q$ has parity $(d-1)$ mod $2$,
    \item[(iii)] $|P|^2 + (1 - a^2)|Q|^2 = 1$.
\end{itemize}
\end{theorem}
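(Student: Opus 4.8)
The plan is to prove the statement as an exact characterization by induction on the number $d$ of signal unitaries, since it is really an ``if and only if'': every phase sequence produces a pair of polynomials $(P,Q)$ of the displayed form obeying (i)--(iii), and conversely every such admissible pair is realized by some $\vec\phi$. Reading the degree and parity bounds (i)--(ii) with the layer count $d$ throughout, write $U_d(\vec\phi)=e^{iZ\phi_0}\prod_{j=1}^d W(a)\,e^{iZ\phi_j}$ and peel off the rightmost factor $W(a)e^{iZ\phi_d}$. If $U_{d-1}$ carries polynomials $(P_{d-1},Q_{d-1})$, a direct $2\times2$ multiplication yields the one-layer recursion
\begin{equation}\label{eq:qsp-rec}
P_d=e^{i\phi_d}\big[\,aP_{d-1}-(1-a^2)Q_{d-1}\,\big],\qquad Q_d=e^{-i\phi_d}\big[\,P_{d-1}+aQ_{d-1}\,\big].
\end{equation}

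First I would dispatch the forward (soundness) direction. The base case $d=0$ is $U_0=e^{iZ\phi_0}$, giving $P=e^{i\phi_0}$, $Q=0$, which satisfies (i)--(iii) trivially. For the inductive step I substitute \eqref{eq:qsp-rec}: since $\deg(aP_{d-1})\le d$ and $\deg((1-a^2)Q_{d-1})\le d$ with both summands of parity $d\bmod 2$, $P_d$ has degree $\le d$ and parity $d$; similarly $Q_d$ has degree $\le d-1$ and parity $d-1$. Because $W(a)$ and $e^{iZ\phi}$ both lie in $SU(2)$ for $a\in[-1,1]$ (each has unit determinant), so does $U_d$, which forces the matrix pattern $\left(\begin{smallmatrix}P & iQ\sqrt{1-a^2}\\ iQ^*\sqrt{1-a^2} & P^*\end{smallmatrix}\right)$ and, via $\det U_d=1$, the normalization $|P|^2+(1-a^2)|Q|^2=1$ of (iii).

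The substantive part is completeness: given $(P,Q)$ obeying (i)--(iii), I construct $\vec\phi$ by inverting \eqref{eq:qsp-rec}. The coefficient matrix $\left(\begin{smallmatrix} a & -(1-a^2)\\ 1 & a\end{smallmatrix}\right)$ has determinant $1$, so for any $\phi_d$ the reduced pair is uniquely
\begin{align}\label{eq:qsp-inv}
P_{d-1}&=a\,e^{-i\phi_d}P+(1-a^2)\,e^{i\phi_d}Q, & Q_{d-1}&=-e^{-i\phi_d}P+a\,e^{i\phi_d}Q.
\end{align}
I must choose $\phi_d$ so that $P_{d-1},Q_{d-1}$ fall to degrees $\le d-1,\le d-2$ with parities $d-1,d-2$; induction then produces $\phi_0,\dots,\phi_{d-1}$, the base case $d=0$ being forced by (iii) as $P=e^{i\phi_0}$. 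Denoting by $p_d$ the coefficient of $a^d$ in $P$ and by $q_{d-1}$ that of $a^{d-1}$ in $Q$, the parity hypotheses already annihilate every intermediate top coefficient, so the sole obstructions are the $a^{d+1}$ coefficient $e^{-i\phi_d}p_d-e^{i\phi_d}q_{d-1}$ of $P_{d-1}$ and the $a^{d}$ coefficient $-e^{-i\phi_d}p_d+e^{i\phi_d}q_{d-1}$ of $Q_{d-1}$. Crucially both vanish under the single condition $e^{2i\phi_d}=p_d/q_{d-1}$, and this has a real solution $\phi_d$ exactly because $|p_d|=|q_{d-1}|$ --- which I read off from the coefficient of $a^{2d}$ in the identity (iii), namely $|p_d|^2-|q_{d-1}|^2=0$. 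Preservation of (iii) for $(P_{d-1},Q_{d-1})$ is then automatic, since $U_{d-1}=U_d\,e^{-iZ\phi_d}W(a)^{-1}$ is again a product of $SU(2)$ matrices and hence has unit determinant.

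The main obstacle is precisely this completeness step, and within it the observation that a single phase $\phi_d$ lowers both degrees at once: a priori one might fear two independent cancellation conditions, but they collapse to the same equation $e^{2i\phi_d}=p_d/q_{d-1}$, whose real solvability is underwritten solely by the normalization (iii). A secondary point I would treat with care is the degenerate case $p_d=q_{d-1}=0$, where the true degree already lies below $d$; there one inserts a trivial phase $\phi_d=0$ (or lowers the induction index) and checks that the zero-polynomial parity conventions keep the induction closed. By contrast the soundness direction is routine $2\times2$ algebra together with the $SU(2)$ determinant argument.
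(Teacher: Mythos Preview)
Your argument is correct and is essentially the standard inductive proof of the QSP characterization: you derive the one-step recursion by right-multiplying with $W(a)e^{iZ\phi_d}$, invert it, and use the top coefficient of the normalization identity (iii) to see that a single real phase $\phi_d$ solving $e^{2i\phi_d}=p_d/q_{d-1}$ simultaneously kills both leading terms. The degenerate case $p_d=q_{d-1}=0$ you flag is indeed the only subtlety, and it is handled exactly as you say.

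Note, however, that the paper does \emph{not} supply its own proof of this statement: it is quoted verbatim as Theorem~4 of Gily\'en et al.\ and used as a black box in the preliminaries. So there is no in-paper proof to compare against. That said, your induction-with-layer-peeling argument is precisely the proof given in the cited reference, so in effect you have reproduced the source proof rather than found an alternative route.
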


Another useful result is
\begin{theorem}[\cite{Martyn_2021}, Theorem 9]~\label{lem:ZX_qsp}
There exists a QSP phase sequence $\vec{\phi} \in \mathbb{R}^{d+1}$ such that
\begin{equation}
    {\rm Poly}(a) = \bra{0} e^{iZ\phi_0} \prod_{j=1}^k W_x(a) e^{iZ\phi_j}\ket{0}
\end{equation}
for $a \in[-1,1]$, and for any polynomial $\mathrm{Poly} \in \mathbb{C}[a]$ if and only if the following conditions hold:
\begin{itemize}
    \item[(i)] $\mathrm{deg}(\mathrm{Poly}) \leq k$,
    \item[(ii)] $\mathrm{Poly}$ has parity $k \bmod 2$,
    \item[(iii)] $\forall a \in[-1,1],|\mathrm{Poly}(a)| \leq 1$,
    \item[(iv)] $\forall a \in(-\infty,-1] \cup[1, \infty),|\mathrm{Poly}(a)| \geq 1$,
    \item[(v)] if $k$ is even, then $\forall a \in \mathbb{R}, \mathrm{Poly}(i a) \mathrm{Poly}^*(i a) \geq 1$.
\end{itemize}
\end{theorem}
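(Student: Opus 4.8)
\textbf{Proof plan for Theorem~\ref{lem:ZX_qsp} (QSP over the basis state):}

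The plan is to prove both directions by reducing to the standard QSP decomposition of Theorem 4 of~\cite{gilyen_quantum_2018}, which is already available. First, for the ``only if'' direction: given the phase sequence $\vec\phi$, the product $e^{iZ\phi_0}\prod_{j=1}^k W_x(a)e^{iZ\phi_j}$ is an $SU(2)$-valued matrix whose entries are, up to the factor $\sqrt{1-a^2}$ in the off-diagonal, polynomials $P,Q$ satisfying properties (i)--(iii) of Theorem 4. Since $\mathrm{Poly}(a) = \bra 0 (\cdots) \ket 0 = P(a)$, properties (i) and (ii) of Theorem~\ref{lem:ZX_qsp} follow immediately from the corresponding statements about $P$. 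Property (iii) follows from $|P(a)|^2 = 1 - (1-a^2)|Q(a)|^2 \le 1$ on $[-1,1]$ since the subtracted term is nonnegative there. For (iv), observe that for $|a|\ge 1$ the quantity $(1-a^2)|Q(a)|^2 = (1-a^2)|Q(a)|^2$ is nonpositive (as $1-a^2\le 0$ and $|Q(a)|^2 \ge 0$ for real $a$), so $|P(a)|^2 \ge 1$. For (v), when $k$ is even, $Q$ has odd parity, hence $Q(ia)$ is $i$ times a real-coefficient evaluation times an odd power of $i$; a short computation shows $(1-(ia)^2)|Q(ia)|^2 = (1+a^2)|Q(ia)|^2 \ge 0$, so $P(ia)P^*(ia) = 1 - (1+a^2)|Q(ia)|^2$; one needs to check the sign here carefully — the claim $\mathrm{Poly}(ia)\mathrm{Poly}^*(ia)\ge 1$ comes from the fact that for even $k$, $Q$ having odd parity forces $|Q(ia)|^2$ to pick up a sign making $(1-(ia)^2)|Q(ia)|^2 \le 0$ along the imaginary axis.

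For the ``if'' direction: given $\mathrm{Poly}$ satisfying (i)--(v), the task is to produce a complementary polynomial $Q$ so that the pair $(P,Q) = (\mathrm{Poly}, Q)$ satisfies conditions (i)--(iii) of Theorem 4 of~\cite{gilyen_quantum_2018}, and then invoke that theorem to obtain $\vec\phi$. This is the standard \emph{completion} step: one must find $Q\in\mathbb C[a]$ with $\deg Q \le k-1$, parity $(k-1)\bmod 2$, and $(1-a^2)|Q(a)|^2 = 1 - |\mathrm{Poly}(a)|^2$ for $a\in[-1,1]$. The right-hand side $1-|\mathrm{Poly}(a)|^2$ is a nonnegative polynomial on $[-1,1]$ vanishing (with the right multiplicity, forced by parity and by (iv)) at $a=\pm 1$, so $\frac{1-|\mathrm{Poly}(a)|^2}{1-a^2}$ is a polynomial that is nonnegative on $[-1,1]$; conditions (iv) and (v) are exactly what guarantee this quotient is also nonnegative on the rest of the real line and the imaginary axis, i.e.\ nonnegative on enough of $\mathbb C$ to admit a polynomial ``square root'' $Q$ via the Fejér–Riesz / spectral factorization argument. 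Once $Q$ is constructed, Theorem 4 yields the phase sequence and taking the $\bra 0\cdot\ket 0$ matrix element recovers $\mathrm{Poly}=P$.

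The main obstacle is the completion/factorization step in the ``if'' direction: verifying that conditions (iv) and (v) are precisely the obstructions to the existence of a valid complementary polynomial $Q$ of the correct degree and parity. The parity-$k$-even case (condition (v)) is the subtle one, because there the natural complement $Q$ has odd degree and one must track how the factor $1-a^2$ interacts with the required positivity along the imaginary axis; this is where the Fejér–Riesz lemma must be applied to a polynomial in $a^2$ after carefully extracting the $(1-a^2)$ factor and checking the global sign. Everything else — the parity bookkeeping, the degree counts, and the ``only if'' direction — is routine algebra given Theorem 4. I would present the argument by first reducing cleanly to Theorem 4 of~\cite{gilyen_quantum_2018}, isolating the completion lemma as the one substantive claim, and proving that lemma by spectral factorization with the positivity hypotheses (iii)--(v) feeding in exactly where needed.
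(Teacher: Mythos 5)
This statement is not proved in the paper at all: it is imported verbatim as Theorem~9 of the cited reference \cite{Martyn_2021}, so there is no in-paper argument to compare against. Your plan is, in substance, the standard proof of that theorem from the QSP literature: reduce to the $(P,Q)$ matrix-form decomposition (Theorem~4 of \cite{gilyen_quantum_2018}), read off conditions (i)--(v) for the ``only if'' direction from the identity $P(a)P^*(a)+(1-a^2)Q(a)Q^*(a)=1$ (promoted from $[-1,1]$ to a polynomial identity on all of $\mathbb{C}$), and prove the ``if'' direction by constructing the complementary polynomial $Q$ via root-pairing/Fej\'er--Riesz factorization of $\bigl(1-\mathrm{Poly}(a)\,\mathrm{Poly}^*(a)\bigr)/(1-a^2)$, with (iv) supplying nonnegativity of that quotient on the real line and (v) supplying it on the imaginary axis in the even-$k$ case. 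That is correct, and you have rightly isolated the completion lemma as the only substantive step. One bookkeeping caution in your ``only if'' argument for (v): for non-real arguments, $Q(ia)Q^*(ia)$ is not $|Q(ia)|^2$ (here $Q^*$ conjugates coefficients, not values); the correct statement is that odd parity of $Q$ gives $Q^*(ia)=\overline{Q(\overline{ia})}=-\overline{Q(ia)}$, hence $Q(ia)Q^*(ia)=-|Q(ia)|^2\le 0$, which combined with $(1-(ia)^2)=1+a^2>0$ yields $\mathrm{Poly}(ia)\mathrm{Poly}^*(ia)\ge 1$ — this is the sign you flagged as needing care, and it does come out as claimed.
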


\section{Proof of Theorem 1}\label{appendix:proof_XZ}
In order to prove Theorem 1, we first prove the following lemmas.

\begin{lemma}~\label{lem:err_poly}
For any odd $k$, there exists a QSP phase sequence $\vec{\phi} \in \mathbb{R}^{k+1}$ such that
\begin{equation}
    \left|\bra{0} e^{iZ\phi_0} \prod_{j=1}^k W_x(\theta) e^{iZ\phi_j}\ket{0} \right|^2 = \sum_{l=-k}^{k} \frac{k-|l|+1}{(k+1)^2} e^{il\theta}.
\end{equation}
\end{lemma}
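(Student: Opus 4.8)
The plan is to apply Theorem~\ref{lem:ZX_qsp} with an explicit choice of target polynomial and then verify its Fourier expansion. First I would observe that under the substitution $a = \cos\theta$, the signal unitary $W_x(\theta)$ matches the convention of Theorem~\ref{lem:ZX_qsp}, so that $\bra{0} e^{iZ\phi_0}\prod_{j=1}^k W_x(\theta) e^{iZ\phi_j}\ket{0} = {\rm Poly}(\cos\theta)$ for any admissible ${\rm Poly}$. The natural candidate is the (normalized) Chebyshev-kernel polynomial: take ${\rm Poly}(a)$ to be a degree-$k$, odd-parity polynomial whose square reproduces the Fej\'er kernel $\sum_{l=-k}^k \frac{k-|l|+1}{(k+1)^2} e^{il\theta}$. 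Concretely, since $\frac{1}{k+1}\sum_{m=0}^{k} e^{im\theta}$ has squared modulus equal to the Fej\'er kernel $\frac{1}{(k+1)^2}\big|\sum_{m=0}^k e^{im\theta}\big|^2 = \sum_{l=-k}^k \frac{k-|l|+1}{(k+1)^2}e^{il\theta}$, it suffices to realize a polynomial ${\rm Poly}$ with $|{\rm Poly}(\cos\theta)| = \frac{1}{k+1}\big|\sum_{m=0}^k e^{im\theta}\big| = \frac{1}{k+1}\big|\frac{\sin((k+1)\theta/2)}{\sin(\theta/2)}\big|$. The Dirichlet-type ratio $\frac{\sin((k+1)\theta/2)}{(k+1)\sin(\theta/2)}$ is, up to sign, a degree-$k$ polynomial in $\cos\theta$ with the correct parity (it equals $\frac{1}{k+1}U_k'$-type combination; more cleanly, $\frac{\sin((k+1)\theta/2)}{\sin(\theta/2)} = \sum_{m=0}^k e^{i(m-k/2)\theta}$ which for the purposes of $|\cdot|$ we may take as $\frac{1}{k+1}\sum_{m=0}^k \cos((2m-k)\theta/2)$; when $k$ is odd this is a polynomial of degree $k$ and parity $k\bmod 2$ in $\cos\theta$). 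I would write ${\rm Poly}(a) \coloneqq \frac{1}{k+1}\sum_{m=0}^{k} T_{|2m-k|/... }$—more carefully, express $\sum_{m=0}^k e^{i(2m-k)\theta/2}$ and identify it as a polynomial in $\cos\theta$ of the stated degree and parity.

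The second step is to check conditions (i)--(v) of Theorem~\ref{lem:ZX_qsp} for this ${\rm Poly}$. Condition (i), $\deg({\rm Poly})\le k$, and (ii), parity $k\bmod 2$, follow from the explicit form above; (iii), $|{\rm Poly}(a)|\le 1$ for $a\in[-1,1]$, follows because on $a=\cos\theta$ we have $|{\rm Poly}|=\frac{1}{k+1}\big|\frac{\sin((k+1)\theta/2)}{\sin(\theta/2)}\big|\le \frac{1}{k+1}\cdot(k+1)=1$ (the maximum of the Dirichlet kernel at $\theta\to 0$), and this bound is attained. Condition (iv) requires $|{\rm Poly}(a)|\ge 1$ for real $|a|\ge 1$: since $k$ is odd, ${\rm Poly}(1)=1$ (the kernel value at $\theta=0$) and ${\rm Poly}(-1)=-1$, and one checks monotone growth of the corresponding trigonometric/hyperbolic extension outside $[-1,1]$ by writing $a=\cosh u$ and noting the ratio $\frac{\sinh((k+1)u/2)}{(k+1)\sinh(u/2)}\ge 1$. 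Condition (v) is vacuous because $k$ is odd. Having verified these, Theorem~\ref{lem:ZX_qsp} supplies a phase sequence $\vec\phi\in\RR^{k+1}$ realizing ${\rm Poly}$, and then $|{\rm Poly}(\cos\theta)|^2 = \sum_{l=-k}^k \frac{k-|l|+1}{(k+1)^2} e^{il\theta}$ by the Fej\'er-kernel identity, which is exactly the claim.

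The main obstacle, and the place I would spend the most care, is the bookkeeping that turns the manifestly-nonnegative trigonometric object $\frac{1}{(k+1)^2}\big|\sum_{m=0}^k e^{im\theta}\big|^2$ into a genuine \emph{polynomial} ${\rm Poly}$ in $a=\cos\theta$ of the correct degree and parity, and then re-verifying the growth condition (iv) for that polynomial outside $[-1,1]$. The parity issue is the subtle one: $\sum_{m=0}^k e^{im\theta}$ itself has a "half-integer" phase $e^{ik\theta/2}$ once symmetrized, so one must be careful that for \emph{odd} $k$ the symmetrized sum $\sum_{m=0}^k \cos((2m-k)\theta/2)$ really is a polynomial in $\cos\theta$ (the half-angles combine correctly only because of the odd parity), and that its degree is exactly $k$. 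Once this identification is pinned down, conditions (i)--(iii) are immediate and (iv)--(v) reduce to elementary facts about Dirichlet kernels; I would also double-check the exponent convention, noting that the factor of $2$ discrepancy between $e^{il\theta}$ here and $e^{2il\theta}$ in Eq.~\eqref{Eq:threhold_func} of the main text is just the rescaling $\theta\mapsto 2\theta$ absorbed when passing between the QSP signal angle and the physical rotation angle.
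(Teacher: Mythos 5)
Your proposal follows essentially the same route as the paper's: factor the Fej\'er kernel as the squared modulus of the normalized Dirichlet-type sum $\frac{1}{k+1}\sum_{l=-k,\,l\ \mathrm{odd}}^{k}e^{il\theta/2}$, rewrite that sum as a combination of Chebyshev polynomials of odd degree, verify conditions (i)--(v) of Theorem~\ref{lem:ZX_qsp} (with (iv) handled by setting $a=\cosh u$ and showing the resulting ratio of hyperbolic sines is at least $1$, and (v) vacuous since $k$ is odd), and then invoke the theorem. This is exactly the content of the paper's proof of Lemma~\ref{lem:err_poly} together with Lemma~\ref{lem:odd poly >= 1}.

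The one point you must fix is the variable identification: the correct substitution is $a=\cos(\theta/2)$, not $a=\cos\theta$. The paper's convention is $W_x(a)=e^{i\arccos(a)X}$, a rotation by angle $\theta=2\arccos(a)$, so the QSP signal parameter is the \emph{half-angle} cosine. Under $a=\cos(\theta/2)$ your candidate becomes $\frac{2}{k+1}\sum_{l=1,\,l\ \mathrm{odd}}^{k}T_l(a)$, a genuine polynomial of degree $k$ and parity $k\bmod 2=1$, and the ``parity issue'' you single out as the main obstacle evaporates. Under $a=\cos\theta$, by contrast, the candidate is \emph{not} a polynomial for any odd $k$ (already for $k=1$ it equals $2\cos(\theta/2)=\pm\sqrt{2(1+\cos\theta)}$), so the half-angles do not ``combine correctly because of the odd parity'' --- oddness of $k$ buys you polynomiality in $\cos(\theta/2)$, not in $\cos\theta$, and Theorem~\ref{lem:ZX_qsp} would simply not apply. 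You correctly sensed that a factor-of-two rescaling is lurking, but located it between the lemma and Eq.~\eqref{Eq:threhold_func} of the main text rather than inside the lemma itself, where it is needed to make the argument well-posed. With that substitution corrected, the remaining verifications in your sketch close the proof exactly as the paper does.
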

\begin{proof}
Denoting $f_k(\theta) = \sum_{l=-k}^{k} \frac{k-|l|+1}{(k+1)^2} e^{il\theta}$, we first note that the following polynomial $P_k(\theta)$
\begin{equation}
    P_k(\theta) \coloneqq \frac{1}{k+1} \sum_{l=-k,\,l\text{ odd}}^{k} e^{il\theta/2} 
\end{equation}
satisfies $P_k(\theta)P^*_k(\theta) = f_k(\theta)$ because
\begin{equation}
    P_k(\theta)P^*_k(\theta) = \frac{1}{(k+1)^2}\sum_{\substack{l=-k,\\l\text{ odd}}}^k\sum_{\substack{m=-k,\\m\text{ odd}}}^k e^{i(l+m)\theta/2} = \frac{1}{(k+1)^2}\sum_{j=-k}^k \sum_{2j=l+m} e^{i j \theta}.
\end{equation}
By checking how many pairs $(l,m)$ such that $2j = l+m$ where $-k \leq j\leq k$, we have that $P_k(\theta)P^*_k(\theta) = P_k(\theta)$. It is easy to see that $P_k(\theta)\in \mathbb{C}[e^{-i\theta/2}, e^{i\theta/2}]$ and has parity $1$ because $k$ is odd. 
Further, note that
\begin{equation}
    P_k(\theta) = \frac{2}{k+1} \sum_{l=1,\,l\text{ odd}}^{k} \cos(l\theta/2)
    = \frac{2}{k+1} \sum_{l=1,\,l\text{ odd}}^{k} T_l\left(\cos(\theta / 2)\right),
\end{equation}
where $T_l$ is the Chebyshev polynomial of the first kind with degree $l$. Denoting $a \coloneqq \cos (\theta/2)$, $P_k$ can be rewritten as 
\begin{equation}~\label{Eq:zxz poly}
    P_k(a) = \frac{2}{k+1} \sum_{l=1,\,l\text{ odd}}^{k} T_l(a).
\end{equation}
Such reconstruction gives 
\begin{enumerate}
    \item [-] $P_k(a) \in \mathbb{C}[a]$ with degree $k$,
    \item [-] $P_k$ has parity $1$, since $T_l$ has parity $l$ mod $2$,
    \item [-] For all $a \in [-1, 1]$, $|P_k(a)| \leq \frac{2}{k+1} \sum_l |T_l(a)| \leq 1$,
    \item [-] For all $a \in [1, +\infty)$, $|P_k(a)| \geq 1$ by Lemma~\ref{lem:odd poly >= 1}.
\end{enumerate}
Therefore, $P_k$ satisfies all conditions in Theorem~\ref{lem:ZX_qsp}, and hence there exists a QSP phase sequence $\vec{\phi}$ such that 
\begin{equation}
     P_k(a) = \bra{0} e^{i Z\phi_0} \prod_{j=1}^k W_x(a) e^{i Z \phi_j}\ket{0} 
     \implies f_k(\theta) = \left|\bra{0} e^{i Z\phi_0} \prod_{j=1}^k W_x(\theta) e^{i Z\phi_j}\ket{0} \right|^2,
\end{equation}
which completes the proof.
\end{proof}

\begin{lemma}~\label{lem:odd poly >= 1}
    For $a \in \RR$ such that $|a| \geq 1$ and an odd $k$, $P_k$ in Eq.~\eqref{Eq:zxz poly} satisfies $|P_k(a)| \geq 1$.
\end{lemma}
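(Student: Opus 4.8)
The plan is to reduce immediately to the range $a \geq 1$ and then bound $P_k$ term by term. First I would record that $P_k$ is an \emph{odd} function of $a$: by Eq.~\eqref{Eq:zxz poly} it is a sum of Chebyshev polynomials $T_l$ with $l$ odd only, and each such $T_l$ satisfies $T_l(-a) = -T_l(a)$, so $P_k(-a) = -P_k(a)$. Hence for $a \leq -1$ one has $|P_k(a)| = |P_k(-a)|$ with $-a \geq 1$, and it suffices to prove $P_k(a) \geq 1$ for every real $a \geq 1$.

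For $a \geq 1$ I would use the hyperbolic parametrization $a = \cosh u$ with $u = \operatorname{arccosh}(a) \geq 0$, for which the standard identity $T_l(\cosh u) = \cosh(lu)$ holds. Then
\begin{equation*}
    P_k(a) = \frac{2}{k+1} \sum_{\substack{l=1\\ l\text{ odd}}}^{k} \cosh(lu) \;\geq\; \frac{2}{k+1}\sum_{\substack{l=1\\ l\text{ odd}}}^{k} 1,
\end{equation*}
since $\cosh(lu) \geq 1$ for every $l$ and every $u \in \mathbb{R}$. (Alternatively one can skip the substitution and simply use that $T_l(1)=1$ and $T_l$ is increasing on $[1,\infty)$, which likewise gives $T_l(a)\geq 1$ there.)

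It remains to count the terms: the number of odd integers $l$ with $1 \leq l \leq k$ is exactly $(k+1)/2$ because $k$ is odd. Therefore $P_k(a) \geq \tfrac{2}{k+1}\cdot\tfrac{k+1}{2} = 1$, which is the claim, with equality precisely at $a = \pm 1$ (consistent with $f_k = |P_k|^2$ being a probability bounded by $1$). I do not expect a genuine obstacle here: the lemma is a short consequence of the growth of Chebyshev polynomials outside $[-1,1]$, and the only two points needing a line of care are the parity reduction to $a \geq 1$ and the elementary count of odd indices, both of which rely on $k$ being odd.
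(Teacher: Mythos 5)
Your proof is correct. The reduction to $a\geq 1$ via the oddness of $P_k$ (each $T_l$ with $l$ odd is an odd function) and the hyperbolic parametrization $T_l(\cosh u)=\cosh(lu)$ match the paper's setup exactly, but your decisive step is different and simpler: you bound each summand by $\cosh(lu)\geq 1$ and count the $(k+1)/2$ odd indices $l\in\{1,3,\dots,k\}$, giving $P_k(a)\geq \tfrac{2}{k+1}\cdot\tfrac{k+1}{2}=1$ immediately. The paper instead evaluates the sum in closed form, $\sum_{l\text{ odd}}\cosh(l\phi)=\sinh((k+1)\phi)/(2\sinh\phi)$, and then argues that $\sinh((k+1)\phi)/((k+1)\sinh\phi)$ is nondecreasing in $k$ with value $1$ at $k=0$. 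The closed form buys a sharper quantitative handle on how fast $P_k$ grows outside $[-1,1]$ (and requires a word about the removable singularity at $\phi=0$, i.e.\ $a=1$), whereas your termwise bound is all the lemma actually needs and avoids both the summation identity and the derivative argument. Either route is valid; yours is the more economical for the stated claim.
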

\begin{proof}
Since $P_k$ is an odd function, it suffices to prove the positive case. In the rest of the proof, we assume $a \geq 1$. By definition of the Chebyshev polynomials, in such a case $T_l(a) = \cosh\left( l \cosh^{-1}(a) \right)$. One can denote $\phi = \cosh(a)$ such that $T_l(a) = \cosh(l\phi)$ and hence 
\begin{align}
    P_k(a) &= \frac{2}{k + 1} \sum_{l=1,\,l\text{ odd}}^{k} \cosh(l \phi)
    = \frac{2}{k + 1} \sum_{j=0}^{(k - 1) / 2} \cosh\left((2j + 1) \phi\right) \\
    &= \frac{2}{k + 1} \frac{\sinh \left((k + 1) \phi\right)}{2 \sinh (\phi)} = \frac{\sinh  \left((k + 1) \phi\right)}{(k + 1) \sinh (\phi)}.
\end{align}
To show that the RHS of the equation is no less than $1$, we further define $f_\phi(k) = \sinh  \left((k + 1) \phi\right) / \left((k + 1) \sinh (\phi)\right)$. Since $f_\phi'(k) \geq 0$ and $f(0) = 1$, we conclude that $f(k) \geq 1$ for $k \geq 0$, and the statement follows.
\end{proof}

\begin{lemma}\label{lem:Ck_geq_id}
Let $k \in \mathbb{N}^+$, and
\begin{equation}
    \Omega_H^{(k)}(\pi) = \frac{1}{2\pi} \int_{0}^{2\pi} \Big[\left(e^{-iH\theta/2}\ox \idop\right)\dketbra{\idop}{\idop} \left(e^{iH\theta/2}\ox \idop\right)\Big]^{\ox k} \mathrm{d}\theta,
\end{equation}
where $H\in\{X,Y,Z\}$. It satisfies
\begin{equation}
    \Omega_H^{(k)}(\pi) - \frac{1}{k+1}\dketbra{\idop}{\idop}^{\ox k} \geq 0.
\end{equation}
\end{lemma}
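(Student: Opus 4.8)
The plan is to exploit the symmetry of the performance operator $\Omega_H^{(k)}(\pi)$ under the one-parameter group $\{e^{-iH\theta/2}\}_\theta$ and diagonalize it explicitly. First I would fix $H=Z$ without loss of generality: since for any $H\in\{X,Y,Z\}$ there is a unitary $V$ with $VHV^\dagger=Z$, conjugating by $(V\ox V^*)^{\ox k}$ maps $\Omega_H^{(k)}(\pi)$ to $\Omega_Z^{(k)}(\pi)$ and fixes $\dketbra{\idop}{\idop}^{\ox k}$ (because $\dket{\idop}=\sum_j\ket{jj}$ satisfies $(V\ox V^*)\dket{\idop}=\dket{\idop}$, and $(VHV^\dagger\ox\idop)$ acting inside each tensor slot is conjugate to $(H\ox\idop)$); hence the matrix inequality is preserved and it suffices to treat $H=Z$.

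Next I would compute $\Omega_Z^{(k)}(\pi)$ in the eigenbasis of $Z$. Writing $\dket{\idop}=\ket{00}+\ket{11}$, the single-slot integrand is the rank-one projector onto $\frac{1}{\sqrt2}(e^{-i\theta/2}\ket{00}+e^{i\theta/2}\ket{11})$ up to normalization; taking the $k$-fold tensor power and expanding in the computational basis $\{\ket{b\,b}:b\in\{0,1\}^k\}$ of the "diagonal" subspace, the matrix element between $\ket{b\,b}$ and $\ket{b'\,b'}$ carries a phase $e^{i\theta(|b'|-|b|)}$ where $|b|$ is the Hamming weight. Integrating $\frac{1}{2\pi}\int_0^{2\pi}e^{i\theta(|b'|-|b|)}\mathrm d\theta$ kills all terms except $|b|=|b'|$, so $\Omega_Z^{(k)}(\pi)$ is block-diagonal with one block per Hamming-weight sector $w\in\{0,\dots,k\}$, and on the weight-$w$ sector it is $\tfrac{1}{2^k}$ times the all-ones matrix, i.e. $\tfrac{1}{2^k}\binom{k}{w}$ times the rank-one projector onto the uniform superposition $\ket{u_w}\coloneqq\binom{k}{w}^{-1/2}\sum_{|b|=w}\ket{b\,b}$. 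Meanwhile $\dketbra{\idop}{\idop}^{\ox k}=\ketbra{\Phi}{\Phi}$ with $\ket{\Phi}=\sum_{b}\ket{b\,b}=\sum_{w}\binom{k}{w}^{1/2}\ket{u_w}$, whose norm-squared is $2^k$; so $\tfrac{1}{k+1}\dketbra{\idop}{\idop}^{\ox k}=\tfrac{1}{k+1}\ketbra{\Phi}{\Phi}$.

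The inequality $\Omega_Z^{(k)}(\pi)-\tfrac{1}{k+1}\dketbra{\idop}{\idop}^{\ox k}\ge 0$ then reduces to a finite-dimensional statement on the $(k+1)$-dimensional span of $\{\ket{u_w}\}_{w=0}^k$ (both operators annihilate everything orthogonal to it). In that basis $\Omega_Z^{(k)}(\pi)$ is diagonal with entries $\tfrac{1}{2^k}\binom{k}{w}$, while $\tfrac{1}{k+1}\ketbra{\Phi}{\Phi}$ is $\tfrac{1}{k+1}$ times the rank-one projector onto the vector with coordinates $v_w=\binom{k}{w}^{1/2}$. So I must show $D-\tfrac{1}{k+1}\ketbra{v}{v}\ge0$ where $D=\mathrm{diag}(\tfrac{1}{2^k}\binom{k}{w})$ and $v_w^2=\binom{k}{w}$. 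Since $D$ is positive definite on this subspace, this is equivalent by a Schur-complement / Cauchy–Schwarz argument to $\tfrac{1}{k+1}\,v^\dagger D^{-1} v\le 1$; and $v^\dagger D^{-1}v=\sum_{w=0}^k \binom{k}{w}\cdot 2^k\binom{k}{w}^{-1}=2^k(k+1)$, wait — let me recompute: $D^{-1}$ has entries $2^k\binom{k}{w}^{-1}$, so $v^\dagger D^{-1}v=\sum_w \binom{k}{w}\cdot 2^k\binom{k}{w}^{-1}=(k+1)2^k$. That would give $\tfrac{1}{k+1}v^\dagger D^{-1}v=2^k\not\le1$, so I have mis-normalized somewhere — the resolution is that $\ket{u_w}$ and $\ket{\Phi}$ need consistent normalization and the factor $\tfrac{1}{t}=\tfrac{1}{2\pi}$ versus the $[0,\pi]$ convention in the main text must be tracked carefully; the correct bookkeeping (the performance operator uses $t=\pi$ with $U_H(\theta)=e^{-iH\theta}$, equivalently half-angle $\theta/2$ over $[0,2\pi]$ as written in the lemma) makes the diagonal entries and the rank-one vector match so that the Cauchy–Schwarz bound is tight exactly with constant $1/(k+1)$, which is also why $1/(k+1)$ is the largest possible constant. \textbf{The main obstacle} is precisely this normalization bookkeeping — getting the measure, the half-angle convention, and the unnormalized $\dket{\cdot}$ definition all consistent so that the rank-one vector $v$ satisfies $v^\dagger D^{-1}v=(k+1)$ exactly, making $D\ge\tfrac{1}{k+1}\ketbra{v}{v}$ hold with equality in the relevant direction; once the finite $(k+1)\times(k+1)$ reduction is set up correctly, the positivity is a one-line Cauchy–Schwarz (or Schur complement) argument.
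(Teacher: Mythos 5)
Your overall strategy is the same as the paper's: reduce to $H=Z$ by conjugation with $(V\ox V^*)^{\ox k}$ (using that $\dket{\idop}$ is invariant under $V\ox V^*$), observe that the $\theta$-integral kills all matrix elements $\ketbra{b\,b}{b'\,b'}$ with $|b|\ne|b'|$ so that $\Omega_Z^{(k)}(\pi)$ is block-diagonal over Hamming-weight sectors with each block a multiple of the all-ones matrix, and then compare a diagonal operator against the rank-one operator $\frac{1}{k+1}\dketbra{\idop}{\idop}^{\ox k}$ on the $(k+1)$-dimensional span of the uniform weight-$w$ superpositions. The paper phrases the sectors as irreps of $\mathrm{U}(1)$ and closes with Cauchy--Schwarz on a generic vector in that span; your Schur-complement condition $\frac{1}{k+1}v^\dagger D^{-1}v\le 1$ is an equivalent (and arguably cleaner) way to finish, since the subtracted operator is rank one.

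However, as written your proof does not close, and you say so yourself: your verification yields $\frac{1}{k+1}v^\dagger D^{-1}v=2^k$, and you leave the resolution as an unverified assertion that ``the correct bookkeeping'' will fix it. The error is concrete and is not about the measure or the half-angle convention: it is the spurious factor $\frac{1}{\sqrt2}$ you inserted when you called the single-slot integrand ``the rank-one projector onto $\frac{1}{\sqrt2}(e^{-i\theta/2}\ket{00}+e^{i\theta/2}\ket{11})$.'' In this paper $\dket{\idop}=\ket{00}+\ket{11}$ is \emph{unnormalized} ($\|\dket{\idop}\|^2=2$), so the integrand is $\ketbra{\psi_\theta}{\psi_\theta}$ with $\ket{\psi_\theta}=e^{-i\theta/2}\ket{00}+e^{i\theta/2}\ket{11}$ of squared norm $2$, and after integration the weight-$w$ block of $\Omega_Z^{(k)}(\pi)$ is exactly the all-ones matrix (no $2^{-k}$), i.e.\ $\binom{k}{w}$ times the projector onto the normalized $\ket{u_w}$. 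Consistently, $\dketbra{\idop}{\idop}^{\ox k}=\ketbra{\Phi}{\Phi}$ with $\ket{\Phi}=\sum_w\binom{k}{w}^{1/2}\ket{u_w}$. With $D=\mathrm{diag}\big(\binom{k}{w}\big)$ and $v_w=\binom{k}{w}^{1/2}$ one gets $v^\dagger D^{-1}v=\sum_{w=0}^{k}1=k+1$, so the Schur-complement condition holds with equality and $1/(k+1)$ is the optimal constant, exactly as in the paper. Once you make this one correction the proof is complete; without it, the crucial final inequality is not established.
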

\begin{proof}
Consider
\begin{equation}
    \left(e^{-iZ\theta/2}\ox \idop\right)\dketbra{\idop}{\idop} \left(e^{iZ\theta/2}\ox \idop\right) = \sum_{j,l=0,1} e^{i\theta(j-l)}\ketbra{jj}{ll}.
\end{equation}
Taking the $k$-fold tensor product, we get
\begin{equation}
    \Big[\left(e^{-iZ\theta/2}\ox \idop\right)\dketbra{\idop}{\idop} \left(e^{iZ\theta/2}\ox \idop\right) \Big]^{\ox k} = \sum_{j_1,l_1,...,j_k,l_k=0,1} e^{i\theta\sum_{\ell=1}^k(j_\ell - l_\ell)} \ketbra{j_1j_1\cdots j_kj_k}{l_1l_1\cdots l_kl_k}.
\end{equation}
Taking integral over $\theta$ from $0$ to $2\pi$, we see that
\begin{equation}
\frac{1}{2\pi}\int_0^{2\pi} e^{i m\theta} d\theta = \delta_{m,0}.
\end{equation}
Thus, only terms with $\sum_{\ell=1}^k(j_\ell-l_\ell) = 0$ survive and we can rewrite
\begin{equation}
    \Omega_{Z}^{(k)}(\pi) = \sum_{\lambda\in \text{irrep}} \sum_{\mathbf{j}_{\lambda} \in \cB_k(\lambda)} \sum_{\mathbf{k}_{\lambda} \in \cB_k(\lambda)} \ketbra{\mathbf{j}_{\lambda}\mathbf{j}_{\lambda}}{\mathbf{k}_{\lambda}\mathbf{k}_{\lambda}},
\end{equation} 
where $\lambda$ runs over the irreducible representations of $\mathrm{U}(1)$: $\lambda(e^{i\theta}) = e^{i \lambda \theta}$ and 
\begin{equation}
    \cB_k(\lambda)\coloneqq \left\{~\mathbf{j}_{\lambda}~\left|\right.~\mathbf{j}_{\lambda}\in \{0,1\}^k,~R_z(\theta)^{\ox k}\ket{\mathbf{j}_{\lambda}} = e^{i \lambda \theta}\ket{\mathbf{j}_{\lambda}} \right\}.
\end{equation}
Here we consider the group action $\mathrm{U}(1) \rightarrow \mathrm{GL}((\mathbb{C}^2)^{\ox 2k}),~ e^{i\theta} \mapsto R_z(\theta)^{\ox k} \ox \idop^{\ox k}$. Notably, by direct calculation, there are $k+1$ irreducible representations for $k$-fold tensor product. Meanwhile, we can also express $\dketbra{\idop}{\idop}^{\ox k}$ as 
\begin{equation}
    \dketbra{\idop}{\idop}^{\ox k} =\sum_{\lambda\in \text{irrep}} \sum_{\lambda' \in \text{irrep}} \sum_{\mathbf{j}_{\lambda} \in \cB_k(\lambda)} \sum_{\mathbf{k}_{\lambda'} \in \cB_k(\lambda')} \ketbra{\mathbf{j}_{\lambda}\mathbf{j}_{\lambda}}{\mathbf{k}_{\lambda'}\mathbf{k}_{\lambda'}}.
\end{equation}
Consider $\max\{t~|~\Omega_{Z}^{(k)}(\pi) - t \cdot \dketbra{\idop}{\idop}^{\ox k} \geq 0\}$. We have
\begin{equation}
    t = \max_{\ket{v}} \frac{\bra{v}\Omega_{Z}^{(k)}(\pi)\ket{v}}{\bra{v} (\dketbra{\idop}{\idop}^{\ox k})\ket{v}}.
\end{equation}
Choose a pure state
\begin{equation}
    \ket{v} = \sum_{\lambda\in \text{irrep}} p_{\lambda} \sum_{\mathbf{j}_{\lambda}\in \cB_k(\lambda)} \ket{\mathbf{j}_{\lambda}\mathbf{j}_{\lambda}},
\end{equation}
where $\sum_{\lambda} |p_{\lambda}|^2 = 1$. It follows that
\begin{equation}
\begin{aligned}
    \bra{v}\Omega_{Z}^{(k)}(\pi)\ket{v} &= \sum_{\lambda\in \text{irrep}} p^*_{\lambda} d_{\lambda} \sum_{\mathbf{k}_{\lambda}\in \cB_k(\lambda)} \bra{\mathbf{k}_{\lambda}\mathbf{k}_{\lambda}} \left(\sum_{\lambda\in \text{irrep}} p_{\lambda} \sum_{\mathbf{j}_{\lambda}\in \cB_k(\lambda)} \ket{\mathbf{j}_{\lambda}\mathbf{j}_{\lambda}}\right)\\
    &= \sum_{\lambda\in \text{irrep}} p^*_{\lambda} p_{\lambda} d_{\lambda}^2,
\end{aligned}
\end{equation}
where $d_{\lambda} = |\cB_k(\lambda)|$ and
\begin{equation}
\begin{aligned}
    \bra{v} (\dketbra{\idop}{\idop}^{\ox k})\ket{v} &= \sum_{\lambda\in \text{irrep}} p^*_{\lambda} d_{\lambda} \sum_{\lambda'\in \text{irrep}} \sum_{\mathbf{k}_{\lambda'}\in \cB_k(\lambda')} \bra{\mathbf{k}_{\lambda'}\mathbf{k}_{\lambda'}}  \left(\sum_{\lambda\in \text{irrep}} p_{\lambda} \sum_{\mathbf{j}_{\lambda}\in \cB_k(\lambda)} \ket{\mathbf{j}_{\lambda}\mathbf{j}_{\lambda}}\right)\\
    &= \sum_{\lambda\in \text{irrep}} p^*_{\lambda} d_{\lambda} \sum_{\lambda\in \text{irrep}} p_{\lambda} d_{\lambda}.
\end{aligned}
\end{equation}
Since $\ket{v}$ is feasible pure state, it follows that
\begin{equation}
    t \geq \frac{\sum_{\lambda}p^*_{\lambda} p_{\lambda} d_{\lambda}^2}{|\sum_{\lambda} p_{\lambda} d_{\lambda}|^2} \geq \frac{\sum_{\lambda} |p_{\lambda}|^2 d_{\lambda}^2}{\sum_{\lambda}1 \cdot \sum_{\lambda} |p_{\lambda} d_{\lambda}|^2} = \frac{1}{k+1},
\end{equation}
where we used the Cauchy inequality for the second inequality. Therefore, we have $t\geq \frac{1}{k+1}$ and thus, 
\begin{equation}\label{Eq:CkZ_id_ineq}
    \Omega_{Z}^{(k)}(\pi) - \frac{1}{k+1} \dketbra{\idop}{\idop}^{\ox k} \geq 0.
\end{equation}

Followed by Eq.~\eqref{Eq:CkZ_id_ineq}, we have for any unitary operator $U$,
\begin{equation}
    (U\ox \overline{U})^{\ox k} \Omega_{Z}^{(k)}(\pi)(U^{\dag} \ox U^T)^{\ox k} - \frac{1}{k+1} \dketbra{\idop}{\idop}^{\ox k} \geq 0,
\end{equation}
where we used the fact that $\dketbra{\idop}{\idop}$ is invariant under $U\ox \overline{U}$. Notice that
\begin{equation}
\begin{aligned}
    &(U\ox \overline{U}) \left(e^{-iZ\theta/2}\ox \idop\right)\dketbra{\idop}{\idop} \left(e^{iZ\theta/2}\ox \idop\right)(U^{\dag} \ox U^T)\\
    =&\; (U\ox \overline{U}) \left(e^{-iZ\theta/2}\ox \idop\right) (U^{\dag} \ox U^T) \dketbra{\idop}{\idop} (U\ox \overline{U}) \left(e^{iZ\theta/2}\ox \idop\right) (U^{\dag} \ox U^T)\\
    =&\; \left(U e^{-iZ\theta/2}U^{\dag} \ox \idop \right) \dketbra{\idop}{\idop} \left(U e^{iZ\theta/2} U^{\dag}\ox \idop \right).
\end{aligned}
\end{equation}
By choosing specific $U$, we can conclude that $\Omega_{X}^{(k)}(\pi) - \frac{1}{k+1} \dketbra{\idop}{\idop}^{\ox k} \geq 0$ and $\Omega_{Y}^{(k)}(\pi) - \frac{1}{k+1} \dketbra{\idop}{\idop}^{\ox k} \geq 0$ as well.
\end{proof}

\begin{lemma}~\label{lem:even_poly}
For any even number $k\geq2$, there exist real Laurant polynomials $F,G\in \mathbb{R}[\omega^{-1}, \omega]$ where $\omega = e^{i\theta/2}$ such that 
\begin{enumerate}
    \item $F(1)=G(1)=\frac{1}{\sqrt{2}}$,
    \item $F(\omega)F(\omega^{-1}) + G(\omega)G(\omega^{-1}) = 1$,
    \item $F(\omega)G(\omega^{-1}) + F(\omega^{-1})G(\omega) = \frac{1}{k+1} \sum_{l=-k,\,l\text{ even}}^{k} \omega^{l}$.
\end{enumerate}
\end{lemma}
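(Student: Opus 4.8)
The plan is to mirror the odd-$k$ construction in Lemma~\ref{lem:err_poly}, but now distributing the required polynomial across \emph{two} QSP-type amplitudes $P$ and $Q$ rather than squaring a single one, since for even $k$ the diagonal block $\bra{0}\cdots\ket{0}$ alone cannot carry the full error polynomial (a parity obstruction: the target $\frac{1}{k+1}\sum_{l\text{ even}}\omega^{l}$ has even parity, matching $|P|^2$, but we need the cross term structure to get $F(1)=G(1)=1/\sqrt{2}$ rather than $0$ or $1$). First I would write down an explicit candidate. A natural guess, by analogy with $P_k(a)=\frac{1}{k+1}\sum_{l\text{ odd}}\omega^{l}$ from the odd case, is to take
\begin{equation*}
    F(\omega) = \frac{1}{\sqrt{2}(k+1)}\sum_{\substack{l=-k,\\l\equiv 0}}^{k}\omega^{l},
    \qquad
    G(\omega) = \frac{1}{\sqrt{2}(k+1)}\sum_{\substack{l=-k,\\l\equiv 0}}^{k}\omega^{l},
\end{equation*}
or more likely a pair in which $F$ collects one parity class of a longer sum and $G$ the complementary translate, so that conditions 1 and 3 hold by the same Dirichlet-kernel counting identity used to show $P_k P_k^*=f_k$ in Lemma~\ref{lem:err_poly}: the number of ordered pairs $(l,m)$ of the relevant parity in $[-k,k]$ with $l+m=2j$ is exactly $k+1-|j|$, which produces the triangular coefficients $\frac{k-|l|+1}{(k+1)^2}$, hence after the $1/\sqrt2$ normalization the $\frac{1}{k+1}\sum_{l\text{ even}}\omega^{l}$ on the right of condition 3. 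Evaluating at $\omega=1$ gives $F(1)=G(1)=\frac{1}{\sqrt2}\cdot\frac{k+1}{k+1}\cdot\frac{1}{\sqrt2}\cdots$ — I would fix the normalization constant precisely so that condition 1 comes out to $1/\sqrt2$.

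Next I would verify condition 2, $F(\omega)F(\omega^{-1})+G(\omega)G(\omega^{-1})=1$. This is the Pythagorean/unitarity constraint and is the analogue of $|P|^2+(1-a^2)|Q|^2=1$ in the Gilyén et al. QSP theorem. With the explicit Laurent polynomials in hand this reduces to a finite trigonometric identity: expand both products as sums of $\omega^{2j}$, collect coefficients, and check that all nonconstant terms cancel while the constant term is $1$. I expect this to follow from the same pair-counting lemma as condition 3, possibly after choosing $F$ and $G$ to be a ``sine/cosine'' split of a common longer kernel (e.g. $F\propto\sum_j\cos((2j{+}1)\theta/4)$-type and $G$ the shifted partner) so that $FF^*+GG^*$ telescopes to a pure Fejér kernel evaluated appropriately.

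Finally, I would record why such $F,G$ being \emph{real} Laurent polynomials in $\omega^{\pm1}$ with these three properties is exactly what is needed downstream — namely they will play the role of the amplitudes $P_{x,k},Q_{x,k}$ (and their $Y$-counterparts with the phase twist $Q_{y,k}=iQ_{x,k}^*$) feeding the three-qubit circuit of Fig.~\ref{fig:XYZ_qsp}, so that realizability as an actual $ZXZ$-QSP sequence follows from Theorem~4 of~\cite{gilyen_quantum_2018} once conditions (i)--(iii) there (degree $\le k$, parity $k\bmod 2$, and the norm identity, which is precisely condition 2 above after the substitution $a=\cos(\theta/2)$) are checked. The main obstacle I anticipate is condition 2: getting a clean closed form for $F$ and $G$ simultaneously satisfying the cross-term equation (condition 3) \emph{and} the sum-of-squares equation is a genuine design problem, and the triangular coefficient structure does not uniquely pin down the split; I would resolve it by starting from the known odd-$k$ solution $P_k$, observing that for even $k$ the relevant Chebyshev sum $\sum_{l\text{ even}}T_l$ has the wrong endpoint behavior to be a single QSP amplitude, and instead searching for $F=\frac{1}{\sqrt2}U$, $G=\frac{1}{\sqrt2}V$ with $U,V$ a complementary pair (i.e. $U(\omega)=\frac12(R(\omega)+R(\omega^{-1}))$-like and $V$ built from the same $R$) so that conditions 2 and 3 become two instances of one counting identity; the normalization at $\omega=1$ then falls out automatically. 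The boundedness/outside-the-interval conditions of Theorem~\ref{lem:ZX_qsp}, which were the technical heart of Lemma~\ref{lem:err_poly} (handled there via Lemma~\ref{lem:odd poly >= 1}), should be avoidable here because the two-amplitude formulation only needs the norm identity of Theorem~4 of~\cite{gilyen_quantum_2018}, not the single-amplitude characterization.
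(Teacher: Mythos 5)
There is a genuine gap: what you have is a search plan rather than a proof, and the one concrete candidate you display is already inconsistent with the lemma. If $F=G$ (your written guess), condition 2 forces $2F(\omega)F(\omega^{-1})=1$, so condition 3 would require $\frac{1}{k+1}\sum_{l\text{ even}}\omega^{l}\equiv 1$, which is false. You correctly identify the real difficulty --- designing $F,G$ that satisfy the sum-of-squares identity (condition 2) and the cross-term identity (condition 3) \emph{simultaneously} --- but you leave it unresolved (``a genuine design problem \ldots does not uniquely pin down the split''), and the Dirichlet-kernel pair-counting that worked for odd $k$ in Lemma~\ref{lem:err_poly} does not by itself produce such a pair.

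The missing idea is a linear change of variables that decouples the two quadratic constraints: set $P=F+G$ and $Q=F-G$. Conditions 1--3 then become $P(1)=\sqrt2$, $Q(1)=0$, $P(\omega)P(\omega^{-1})=1+D(\omega)$ and $Q(\omega)Q(\omega^{-1})=1-D(\omega)$, where $D(\omega)=\frac{1}{k+1}\sum_{l\text{ even}}\omega^{l}$ is a normalized Dirichlet kernel, so $|D|\le 1$ on the unit circle and $1\pm D\ge 0$ there. Each equation is now a standalone Fej\'er--Riesz spectral-factorization problem, which the paper solves by grouping the roots of $1\pm D$ into conjugate--reciprocal quadruples $\{\zeta,\overline{\zeta},\zeta^{-1},\overline{\zeta}^{-1}\}$, assigning one conjugate pair per quadruple to $P$ (resp.\ $Q$) so the result has real coefficients, and fixing the normalization constant; the double zeros of $1-D$ at $\omega=\pm1$ are pulled out as an explicit factor $(z^2-1)$ of $Q$, which also yields $Q(1)=0$ for free. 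Without this decoupling (or an equivalent device), and without verifying the non-negativity and root structure needed to run the factorization, your outline cannot be completed as written; the explicit ``complementary pair'' you hope for does not exist in closed form for general $k$ (for small $k$ one can solve the coupled system by hand, e.g.\ $P=a\omega+b\omega^{-1}$, $Q=c(\omega-\omega^{-1})$ for $k=2$, but not via a single counting identity). Your closing remark that the boundedness-outside-$[-1,1]$ conditions are avoidable is correct in spirit but concerns the downstream QSP realizability, not this lemma, which is a pure existence statement about Laurent polynomials.
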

\begin{proof}
Denote $P(\omega) \coloneqq F(\omega) + G(\omega)$ and $Q(\omega) \coloneqq F(\omega) - G(\omega)$. It follows that the above conditions can be equivalently written as
\begin{enumerate}
    \item $P(1) = \sqrt{2},~Q(1) = 0$,
    \item $P(\omega)P(\omega^{-1}) = 1 + \frac{1}{k+1}\sum_{l=-k,\,l\text{ even}}^{k} \omega^{l}$,
    \item $Q(\omega)Q(\omega^{-1}) = 1 - \frac{1}{k+1}\sum_{l=-k,\,l\text{ even}}^{k} \omega^{l}$.
\end{enumerate}
Denote
\begin{equation}
\begin{aligned}
&H(z)\coloneqq1+ \frac{1}{k+1}\sum_{l=-k,\,l\text{ even}}^{k} z^l
=\frac{1}{(k+1)z^k}\prod_{H(\zeta)=0}(z-\zeta),\\
&L(z)\coloneqq1- \frac{1}{k+1}\sum_{l=-k,\,l\text{ even}}^{k} z^l
=-\frac{1}{(k+1)z^k}\prod_{L(\zeta)=0}(z-\zeta).
\end{aligned}
\end{equation}
In the following, we will prove the lemma by constructing Laurant polynomials $P,Q\in \mathbb{R}[\omega^{-1},\omega]$ that satisfy the above three conditions.

First, notice that $\forall z\in\mathbb{C},\ H(z)=H(z^{-1}),~H(\zeta)=0\implies H(\overline\zeta)=0$, and $|z|=1\text{ or } z\in\mathbb R\implies H(z)\ne0$. We have
\begin{equation}
H(z)=\frac{1}{(k+1)z^{k}} \prod_{H(\zeta)=0,|\zeta|>1,\operatorname{Im}\zeta>0} (z-\zeta)(z-\overline\zeta)(z-\zeta^{-1})(z-\overline\zeta^{-1}).
\end{equation}
Let 
\begin{align}P(z)=&\; a\prod_{H(\zeta)=0,|\zeta|>1,\operatorname{Im}\zeta>0}(z-\zeta)(z-\overline\zeta)\\
=&\; a\prod_{H(\zeta)=0,|\zeta|>1,\operatorname{Im}\zeta>0}(z^2-(\zeta+\overline\zeta)z+\zeta\overline\zeta)\in\mathbb R[z,a].
\end{align}
It follows that
\begin{align}
P(z)P(z^{-1})
=&\; a^2\prod_{H(\zeta)=0,|\zeta|>1,\operatorname{Im}\zeta>0}
(z-\zeta)(z-\overline\zeta)(z^{-1}-\zeta)(z^{-1}-\overline\zeta)\\
=&\; {a^2}\prod_{H(\zeta)=0,|\zeta|>1,\operatorname{Im}\zeta>0}
(z-\zeta)(z-\overline\zeta)\frac{\zeta\overline\zeta}{z^2}(\zeta^{-1}-z)(\overline\zeta^{-1}-z)\\
=&\; \frac{a^2}{z^k}\prod_{H(\zeta)=0,|\zeta|>1,\operatorname{Im}\zeta>0}
{\zeta\overline\zeta}(z-\zeta)(z-\overline\zeta)(\zeta^{-1}-z)(\overline\zeta^{-1}-z)\\
=&\; {a^2(k+1)H(z)}\prod_{H(\zeta)=0,|\zeta|>1,\operatorname{Im}\zeta>0}
{\zeta\overline\zeta}.
\end{align}
Then by choosing
\begin{equation}
    a=\Big[(k+1)\prod_{H(\zeta)=0,|\zeta|>1,\operatorname{Im}\zeta>0}\zeta\overline\zeta\Big]^{-1/2}\in\mathbb R,
\end{equation}
we have
\begin{equation}
    P(z)P(z^{-1})=H(z)\text{ and }P(z)\in\mathbb R[z],
\end{equation}
$H(1)=2$ and $P(z) = \sqrt{2}$. Second, by calculating
\begin{equation}
\frac{d}{dz}L(z)|_{z=\pm1} = \frac{1}{k+1} \frac{d}{dz} \sum_{l=-k,\,l\text{ even}}^{k} z^l \Big|_{z=\pm1} = \frac{1}{k+1}\sum_{l=-k,\,l\text{ even}}^{k} l z^{l-1} \Big|_{z=\pm1} = \frac{1}{k+1}\sum_{l=-k,\,l\text{ even}}^{k} l z \Big|_{z=\pm1} = 0,
\end{equation}
and
\begin{equation}
\frac{d^2}{dz^2}L(z)|_{z=\pm1} = \frac{1}{k+1} \frac{d^2}{dz^2} \sum_{l=-k,\,l\text{ even}}^{k} z^l|_{z=\pm1} = \frac{1}{k+1}\sum_{l=-k,\,l\text{ even}}^{k} l(l-1) z^{l-2} \Big|_{z=\pm1} = \frac{1}{k+1}\sum_{l=-k,\,l\text{ even}}^{k} l(l-1) > 0,
\end{equation}
we know the zero multiplicity of the polynomial $L(z)$ at $z=\pm 1$ is $2$. Noticing that $\forall z\in\mathbb{C},\ L(z)=L(z^{-1}),~L(\zeta)=0\implies L(\overline\zeta)=0$, and $(|z|=1\text{ or } z\in\mathbb R)\implies (L(z)\ne0\text{ or }z=\pm1)$, we can rewrite
\begin{equation}
L(z) = -\frac{(z^2-1)^2}{(k+1)z^{k}} \prod_{L(\zeta)=0,|\zeta|>1,\operatorname{Im}\zeta>0} (z-\zeta)(z-\overline\zeta)(z-\zeta^{-1})(z-\overline\zeta^{-1}).
\end{equation}
Let
\begin{equation}
\begin{aligned}
Q(z)=&\; b(z^2-1)
\prod_{L(\zeta)=0,|\zeta|>1,\operatorname{Im}\zeta>0}
(z-\zeta)(z-\overline\zeta)\\
=&\; b(z^2-1)
\prod_{L(\zeta)=0,|\zeta|>1,\operatorname{Im}\zeta>0}
(z^2-(\zeta+\overline\zeta)z+\zeta\overline\zeta)\in\mathbb R[z,b].
\end{aligned}
\end{equation}
It follows that
\begin{align}
Q(z)Q(z^{-1})
=&\; b^2(z^2-1)(z^{-2}-1)\prod_{L(\zeta)=0,|\zeta|>1,\operatorname{Im}\zeta>0}
(z-\zeta)(z-\overline\zeta)(z^{-1}-\zeta)(z^{-1}-\overline\zeta)\\
=&\; -\frac{b^2(z^2-1)^2}{z^2}\prod_{L(\zeta)=0,|\zeta|>1,\operatorname{Im}\zeta>0}
(z-\zeta)(z-\overline\zeta)\frac{\zeta\overline\zeta}{z^2}(\zeta^{-1}-z)(\overline\zeta^{-1}-z)\\
=&\; -\frac{b^2(z^2-1)^2}{z^k}\prod_{L(\zeta)=0,|\zeta|>1,\operatorname{Im}\zeta>0}
{\zeta\overline\zeta}(z-\zeta)(z-\overline\zeta)(\zeta^{-1}-z)(\overline\zeta^{-1}-z)\\
=&\; {b^2(k+1)L(z)}\prod_{L(\zeta)=0,|\zeta|>1,\operatorname{Im}\zeta>0}
{\zeta\overline\zeta}.
\end{align}
By choosing 
\begin{equation}
b = \pm\Big[(k+1)\prod_{L(\zeta)=0,|\zeta|>1,\operatorname{Im}\zeta>0}\zeta\overline\zeta\Big]^{-1/2}\in\mathbb R,  
\end{equation}
we have
\begin{equation}
    Q(z)Q(z^{-1})=L(z)\text{ and }Q(z)\in\mathbb R[z],
\end{equation}
and $Q(1) = 0$. Hence, we complete the proof.
\end{proof}

Now, we are ready to prove Theorem \textcolor{blue}{1}.

\renewcommand\theproposition{\textcolor{blue}{1}}
\begin{theorem}[Binary Hamiltonian recognition]\label{thm:recog_XZ}
For $H_0 = \vec{n}_0\cdot \vec{\sigma}$ and $H_1 = \vec{n}_1\cdot \vec{\sigma}$ with prior probabilities $p$ and $1-p$, respectively, if $\vec{n}_0 \perp \vec{n}_1$, the optimal average success probability of recognizing $\{H_0,H_1\}$ using $k$ queries is given by
\begin{equation*}
{\rm Suc}_k^{\texttt{GEN}}\left(\{H_0, H_1\},  \{ p, 1-p\}, \pi \right) = \frac{k+\max \{p,1-p\}}{k+1}.
\end{equation*}
This limit is globally optimal, holding for all $k$-query strategies, and is saturated by a sequential protocol without entanglement.
\end{theorem}
\renewcommand{\theproposition}{S\arabic{proposition}}

We will first prove the result for $\vec{n}_0 = [1,0,0]$ and $\vec{n}_1 = [0,0,1]$ and reduce the general cases to this one. For $H_0 = X$ and $H_1 = Z$, we are going to show that ${\rm Suc}_k^{\texttt{SEQ}}\left(\{H_0, H_1\},  \{ p, 1-p\}, \pi \right) \geq (k+\max\{p,1-p\})/(k+1)$ by utilizing the framework of quantum signal processing to construct a protocol for the recognition explicitly; and then prove ${\rm Suc}_k^{\texttt{SEQ}}\left(\{H_0, H_1\},  \{ p, 1-p\}, \pi \right) \leq (k+\max\{p,1-p\})/(k+1)$ by using the SDP dual formalism of the problem. In the following, the proof is presented for the range $\frac{1}{2} \le p < 1$, with the rationale to be detailed later.

\subsection{Achievability}

\paragraph{When $k$ is odd.} We construct a protocol for recognizing $\{X,Z\}$ as follows. First, we explicitly chose $k+1$ angles $\phi_j,j=0,1,...,k$. Given an unknown unitary $e^{-iH\theta}$, we sequentially apply $e^{iZ\phi_k} e^{iH\theta} e^{iZ\phi_{k-1}} e^{iH\theta}\cdots e^{iZ\phi_1} e^{iH\theta} e^{iZ\phi_0}$ to the zero state. Then we measure the state on a computational basis. If outcome is $0$, we decide $H$ is $Z$; if outcome is $1$, we decide $H$ is $X$. We will explain later how we chose $\vec{\phi}$. It can be seen that the composed unitary is reminiscent of the QSP framework, where $U_H$ is the signal unitary, and one has control over the angles $\vec{\phi}$, which are called QSP phase sequences. Therefore, we denote the composed unitary as
\begin{equation}~\label{Eq:qsp_UH_theta}
    {\rm QSP}_{H,k}(\theta) \coloneqq R_z(\phi_0) \prod_{j=1}^k U_{H}(\theta) R_z(\phi_j).
\end{equation}
If the unknown unitary is $U_Z(\theta) = e^{-i Z \theta}$, we have
\begin{equation}
    {\rm QSP}_{Z,k}(\theta/2) = R_z(\phi_0)\prod_{j=1}^k R_z(\theta)R_z(\phi_j) = R_z(n\theta + \phi_{:k}),
\end{equation}
where $\phi_{:k} \coloneqq \sum_{j=1}^k \phi_j$. It follows that for any chosen parameters $\vec{\phi}$, it always holds that
\begin{equation}\label{Eq:odderr_Z}
    \big|\bra{1} {\rm QSP}_{Z,k}(\theta) \ket{0}\big| = 0\; \text{ and }\; \big|\bra{0} {\rm QSP}_{Z,k}(\theta) \ket{0}\big| = 1, \; \forall \theta \in[0, \pi].
\end{equation}
If the unknown unitary is $U_X(\theta) = e^{-i X \theta}$, we have
\begin{equation}
    {\rm QSP}_{X,k}(\theta/2) = R_z(\phi_0)\prod_{j=1}^k R_x(\theta) R_z(\phi_j).
\end{equation}
According to Lemma~\ref{lem:err_poly}, we can choose a specific $\vec{\phi}$ for our protocol such that
\begin{equation}\label{Eq:odderr_X}
    \left|\bra{0} {\rm QSP}_{X,k}(\theta/2) \ket{0}\right|^2 = \sum_{l=-k}^{k} \frac{k-|l|+1}{(k+1)^2} e^{il\theta}.
\end{equation}
Combining Eq.~\eqref{Eq:odderr_X} and Eq.~\eqref{Eq:odderr_Z}, the average success probability is lower bounded by

\begin{equation}\label{Eq:XZ_odd_suc_geq}
\begin{aligned}
    {\rm Suc}_k^{\texttt{SEQ}}\left(\{X, Z\}, \{p, 1-p\},\pi \right) &\geq \frac{1}{4\pi^2} \int_{0}^{2\pi} \int_{0}^{2\pi} p\left|\bra{0} {\rm QSP}_{Z,k}(\theta_0/2) \ket{0}\right|^2 + (1-p)\left|\bra{1} {\rm QSP}_{X,k}(\theta_1/2) \ket{0}\right|^2 \mathrm{d}\theta_0 \mathrm{d}\theta_1 \\
     &= \frac{1}{2\pi} \int^{2\pi}_0 p + (1-p)\Big[\frac{k}{k+1} - \frac{2}{(k+1)^2}\sum_{l=1}^{k}(k-l+1)\cos(l\theta_1) \mathrm{d}\theta_1\Big]\\
     &= \frac{k+p}{k+1} -  \frac{1}{2\pi} \frac{2(1-p)}{(k+1)^2}\sum_{l=1}^{k}(k-l+1) \int^{2\pi}_0 \cos(l\theta_1) \mathrm{d}\theta_1\\
     &= \frac{k+p}{k+1},
\end{aligned}
\end{equation}
where in the second line we used the fact that $\left(e^{il\theta} + e^{-il\theta}\right) = 2\cos(l\theta)$ and
\begin{equation}
    1-\sum_{l=-k}^{k} \frac{k-|l|+1}{(k+1)^2} e^{il\theta} = \frac{k}{k+1} - \frac{2}{(k+1)^2}\sum_{l=1}^{k}(k-l+1)\cos(l\theta).
\end{equation}
\begin{remark}
For $ \frac{1}{2} \le p < 1$, the average success probability is given by $(k+p)/(k+1)$. In the case where $0 < p < \frac{1}{2}$, the recognition problem exhibits a symmetry under the interchange of $X$ and $Z$ via the Hadamard transformation $H$, since $HXH = Z$ and $HZH = X$. Consequently, the scenario is equivalent to one with parameter $1-p$, and the average success probability becomes $(k+1-p)/(k+1)$. Subsequent proofs follow from an analogous reasoning.
\end{remark}

\paragraph{When $k$ is even.} 
Let $k=2n$, and we construct a similar protocol with a quantum circuit read as
\begin{equation}
    {\rm QSP}_{H,k}(\theta/2) \coloneqq R_x(\phi_0) \left( \prod_{j=1}^n R_H(\theta) R_x(\phi_j)\right) R_z(\pi)R_x(-\phi_{n}) \left( \prod_{j=n-1}^{0} R_H(\theta) R_x(-\phi_j) \right),
\end{equation}
where we set $\sum_{j=0}^n \phi_j = -\pi/2$. Let the input state be the zero state, and measure the output state on a computational basis. If outcome is $0$, we decide $H$ is $Z$; if outcome is $1$, we decide $H$ is $X$.

If the unknown unitary is $U_Z(\theta) = e^{-i Z \theta}$, we have
\begin{equation}~\label{V_ev_zk_theta}
    {\rm QSP}_{Z,k}(\theta/2) = R_x(\phi_0) \left( \prod_{j=1}^n R_z(\theta) R_x(\phi_j)\right) R_z(\pi)R_x(-\phi_{n}) \left( \prod_{j=n-1}^{0} R_z(\theta) R_x(-\phi_j) \right).
\end{equation}
Noticing the relationship between different types of QSP, we have
\begin{equation}
    R_x(\phi_0) \left( \prod_{j=1}^n R_z(\theta) R_x(\phi_j) \right) = H R_z(\phi_0) \left( \prod_{j=1}^n R_x(\theta) R_x(\phi_j) \right) H = \left(\begin{array}{cc}
F(\omega) & iG(\omega) \\
iG(\omega^{-1}) & F(\omega^{-1})
\end{array}\right),
\end{equation}
where $F,G$ are real Laurent polynomials $F,G \in R[\omega, \omega^{-1}]$ with parity $n$ mod $2$, and $\omega=e^{i\theta/2}$~\cite{Martyn_2021}. Notice that
\begin{equation}
\begin{aligned}
    R_x(-\phi_n) \left( \prod_{j=n-1}^0 R_z(\theta) R_x(-\phi_j) \right) &= X R_x(-\phi_n)\left( \prod_{j=n-1}^0 R_z(-\theta) R_x(-\phi_j) \right)X\\
    &= X\left(\begin{array}{cc}
    F(\omega^{-1}) & -iG(\omega) \\
    -iG(\omega^{-1}) & F(\omega)
    \end{array}\right)X\\
    &= \left(\begin{array}{cc}
    F(\omega) & -iG(\omega^{-1}) \\
    -iG(\omega) & F(\omega^{-1})
    \end{array}\right).
\end{aligned}
\end{equation}
It follows that
\begin{equation}
\begin{aligned}
{\rm QSP}_{Z,k}(\theta/2) &= \left(\begin{array}{cc}
    F(\omega) & iG(\omega) \\
    iG(\omega^{-1}) & F(\omega^{-1})
    \end{array}\right)
    \left(\begin{array}{cc}
    -i &  0 \\
    0 & i
    \end{array}\right)
    \left(\begin{array}{cc}
    F(\omega) & -iG(\omega^{-1}) \\
    -iG(\omega) & F(\omega^{-1})
    \end{array}\right)\\
    &= -i\left(\begin{array}{cc}
    F^2(\omega) - G^2(\omega) & -i\Big[F(\omega)G(\omega^{-1}) + F(\omega^{-1})G(\omega)\Big]\\
    i\Big[F(\omega)G(\omega^{-1}) + F(\omega^{-1})G(\omega)\Big] & -F^2(\omega^{-1}) + G^2(\omega^{-1})
    \end{array}\right).
\end{aligned}
\end{equation}
By Lemma~\ref{lem:even_poly}, we can construct a QSP phase sequence $\vec{\phi}$ such that
\begin{equation}
    \bra{1}{\rm QSP}_{Z,k}(\theta/2)\ket{0} = \frac{1}{k+1} \sum_{l=-k,\,l\text{ even}}^{k} e^{il\theta/2}
\end{equation}
and thus 
\begin{equation}\label{Eq:evenerr_X}
    \left| \bra{1}{\rm QSP}_{Z,k}(\theta/2)\ket{0} \right|^2 = \sum_{l=-k}^{k} \frac{k-|l|+1}{(k+1)^2} e^{il\theta}.
\end{equation}

If the unknown unitary is $U_X(\theta) = e^{-i X \theta}$, we have
\begin{equation}
    {\rm QSP}_{X,k}(\theta/2) \coloneqq R_x(\phi_0) \left( \prod_{j=1}^n R_x(\theta) R_x(\phi_j)\right) R_z(\pi)R_x(-\phi_{n}) \left( \prod_{j=n-1}^{0} R_x(\theta) R_x(-\phi_j) \right).
\end{equation}
Since we choose $\phi_{:n} \coloneqq \sum_{j=0}^{n} \phi_j = -\pi/2$, we further have
\begin{equation}
    {\rm QSP}_{X,k}(\theta/2) = R_x(n\theta + \phi_{:n}) R_z(\pi) R_x(n\theta - \phi_{:n}) = \left(\begin{array}{cc}
        -i\cos \phi_{:n} & \sin\phi_{:n} \\
        -\sin \phi_{:n} & i\cos\phi_{:n}
    \end{array}\right) = \left(\begin{array}{cc}
        0 & -1 \\
        1 & 0
    \end{array}\right).
\end{equation}
It follows that $\left|\bra{0}{\rm QSP}_{X,k}(\theta/2)\ket{0}\right|^2 = 1$. Combining this with Eq.~\eqref{Eq:evenerr_X}, we have
\begin{equation}\label{Eq:XZ_even_suc_geq}
\begin{aligned}
    &\quad\,\, {\rm Suc}_k^{\texttt{SEQ}}\left(\{X, Z\}, \{p,1-p\}, \pi \right) \\
    &\geq \frac{1}{4\pi^2} \int_{0}^{2\pi} \int_{0}^{2\pi} p\left|\bra{0} {\rm QSP}_{Z,k}(\theta_0/2) \ket{0}\right|^2 + (1-p)\left|\bra{0} {\rm QSP}_{X,k}(\theta_1/2) \ket{1}\right|^2 \mathrm{d}\theta_0  \mathrm{d}\theta_1 = \frac{k+p}{k+1}.
\end{aligned}
\end{equation}
Hence, we complete the achievability part.

\begin{remark}
We remark on the robustness of our QSP-based protocol under depolarizing noise. Suppose in practice, every quantum gate is concatenated with a depolarizing noise $\cD_\lambda(\cdot) = (1-\lambda)(\cdot) + \lambda\idop/2$. Since the depolarizing channel commutes with unitary gates, which means that applying a unitary gate followed by a depolarizing channel is equivalent to applying the depolarizing channel first. By calculating the average success probability, we have
\begin{equation}
    {\rm Suc}_{k,\text{depo}}^{\texttt{SEQ}}\left(\{X, Z\}, \{p,1-p\}, \pi \right) = \begin{cases}
        \frac{1}{2} + \frac{1}{2}(1 - \lambda)^{2k+1} \cdot \frac{k}{k + 1} & \text{when } k \text{ is odd} \\
        \frac{1}{2} + \frac{1}{2}(1 - \lambda)^{2k+3} \cdot \frac{k}{k + 1} & \text{when } k \text{ is even}.
    \end{cases}
\end{equation}
\end{remark}

\subsection{Optimality}
The primal problem for calculating the optimal average success probability of discriminating $X$ and $Z$ over time can be written as
\begin{equation}\label{Eqappe:opt_suc_pri_XZ}
{\rm Suc}_k^{\texttt{SEQ}}\left(\{X, Z\}, \{p,1-p\}, \pi \right) =\max_{\{T_0, T_1\}\in \cT^{\texttt{SEQ}}}\; \tr\Big(pT_0 \Omega_{X}^{(k)}(\pi) + (1-p) T_1 \Omega_{Z}^{(k)}(\pi)\Big).
\end{equation}
Recall that Ref.~\cite{Bavaresco_2021} presented a method to obtain a dual problem of the primal problem in Eq.~\eqref{Eqappe:opt_suc_pri_XZ} based on the characterization of dual affine spaces. The dual problem of Eq.~\eqref{Eqappe:opt_suc_pri_XZ} can be written as follows, and Slater's condition holds.
\begin{equation}\label{sdpappendix:XZ}
\begin{aligned}
    {\rm Suc}_k^{\texttt{SEQ}}\left(\{X, Z\}, \{p,1-p\},\pi \right) = \min &\;\; \lambda \\
    {\rm s.t.} &\;\; \Omega_{X}^{(k)}(\pi) \leq \frac{1}{p}\lambda \overline{W},\\
    &\;\; \Omega_{Z}^{(k)}(\pi) \leq \frac{1}{1-p}\lambda\overline{W},\\
    &\;\; \overline{W} \in \overline{\cW}^{\texttt{SEQ}}.
\end{aligned}
\end{equation}
For $k$-slot sequential strategies, the dual affine space $\overline{\cW}^{\texttt{SEQ}}$ is given by the set of Choi operators of $k-1$-slot quantum combs~\cite{Bavaresco_2021}. In the following, we will construct a feasible solution to the above optimization problem as $\lambda_k = \frac{k+p}{k+1}$ and
\begin{equation}
    \overline{W}_k = \frac{p(k+1)}{k+p}\Omega_{X}^{(k)}(\pi) + \frac{(1-p)(k+1)}{k+p}\Omega_{Z}^{(k)}(\pi) - \frac{(1-p)}{k+p}\dketbra{\idop}{\idop}^{\ox k}.
\end{equation}
First, by Lemma~\ref{lem:Ck_geq_id}, we have
\begin{equation}
    \Omega_{X}^{(k)}(\pi) - \frac{1}{k+1}\dketbra{\idop}{\idop}^{\ox k} \geq 0, \quad\text{and}\quad \Omega_{Z}^{(k)}(\pi) - \frac{1}{k+1}\dketbra{\idop}{\idop}^{\ox k} \geq 0.
\end{equation}
It follows that $\overline{W}_k \geq 0$. Moreover, it is easy to see that $\Omega_{X}^{(k)}(\pi)$, $\Omega_{Z}^{(k)}(\pi)$, and $\ketbra{\Phi}{\Phi}^{\ox k}$ are Choi operators of $k-1$-slot quantum combs. It is then easy to see that $\overline{W}_k$ is the Choi operator of a $k-1$-slot quantum comb and thus, $\overline{W}\in \overline{\cW}^{\texttt{SEQ}}$. Second, since $\frac{1}{2} \le p < 1$, it can be verified that
\begin{equation}
\begin{aligned}
    & \frac{1}{p}\lambda_k \overline{W}_k - \Omega_{X}^{(k)}(\pi) =\frac{1-p}{p}\Big( \Omega_{Z}^{(k)}(\pi) - \frac{1}{k+1}\dketbra{\idop}{\idop}^{\ox k} \Big) \geq 0,\\
    & \frac{1}{1-p}\lambda_k\overline{W}_k - \Omega_{Z}^{(k)}(\pi) = \frac{p}{1-p}\Big( \Omega_{X}^{(k)}(\pi) - \frac{1}{k+1}\dketbra{\idop}{\idop}^{\ox k}\Big) + \frac{2p-1}{1-p}\dketbra{\idop}{\idop}^{\ox k} \geq 0.
\end{aligned}
\end{equation}
Therefore, $\{\lambda_0, \overline{W}_k\}$ satisfies all constraints in the SDP~\eqref{sdpappendix:XZ} and is a feasible solution to it. As the SDP~\eqref{sdpappendix:XZ} is a minimization problem, it follows that
\begin{equation}\label{Eq:XZ_suc_leq}
    {\rm Suc}_k^{\texttt{SEQ}}\left(\{X, Z\}, \{p,1-p\}, \pi \right)  \leq \frac{k+p}{k+1}.
\end{equation}
Combining Eq.~\eqref{Eq:XZ_odd_suc_geq}, Eq.~\eqref{Eq:XZ_even_suc_geq} and Eq.~\eqref{Eq:XZ_suc_leq}, we arrive at
\begin{equation}
   {\rm Suc}_k^{\texttt{SEQ}}\left(\{X, Z\}, \{p,1-p\}, \pi \right) = \frac{k+p}{k+1}.
\end{equation}
Moreover, we show that general strategies do not offer any advantage for recognizing ${X,Z}$. This is because the dual affine space $\overline{\cW}^{\texttt{GEN}}$ of $\cW^{\texttt{GEN}}$, the set of all general strategies that encompasses indefinite causal orders, is given by the set of Choi operators of $k$-partite non-signaling channels without the positivity constraint~\cite{Bavaresco_2021}:
\begin{equation}
\begin{aligned}
_{O_1} \overline{W}^{\texttt{GEN}} & = _{I_1 O_1} \overline{W}^{\texttt{GEN}}, \\
_{O_2} \overline{W}^{\texttt{GEN}} & = _{I_2 O_2} \overline{W}^{\texttt{GEN}}, \\
& \cdots \\
_{O_k} \overline{W}^{\texttt{GEN}} & = _{I_k O_k} \overline{W}^{\texttt{GEN}}, \\
\tr \overline{W}^{\mathrm{GEN}} & = d_{I_1} d_{I_2} \cdots d_{I_k}.
\end{aligned}
\end{equation}
It can be checked that $\Omega_{X}^{(k)}(\pi),\Omega_{Z}^{(k)}(\pi)$ and $\dketbra{\idop}{\idop}^{\ox k}$ all satisfy the above conditions, implying that $\overline{W}_k$ also satisfies these conditions. Consequently, $\overline{W}_k$ is a feasible solution for the following SDP for recognizing $\{X,Z\}$ with general strategies.
\begin{equation}
\begin{aligned}
    {\rm Suc}_k^{\texttt{GEN}}\left(\{X, Z\}, \{p,1-p\}, \pi \right) = \min &\;\; \lambda \\
    {\rm s.t.} &\;\; \Omega_{X}^{(k)}(\pi) \leq \frac{1}{p}\lambda \overline{W},\\
    &\;\; \Omega_{Z}^{(k)}(\pi) \leq \frac{1}{1-p}\lambda\overline{W},\\
    &\;\; \overline{W} \in \overline{\cW}^{\texttt{GEN}}.
\end{aligned}
\end{equation}
It follows ${\rm Suc}_k^{\texttt{GEN}}\left(\{X, Z\}, \{p,1-p\}, \pi \right) \leq \frac{k+p}{k+1}$. Since ${\rm Suc}_k^{\texttt{GEN}}\left(\{X, Z\}, \{p,1-p\}, \pi \right) \geq {\rm Suc}_k^{\texttt{SEQ}}\left(\{X, Z\}, \{p,1-p\}, \pi \right)$ by definition, we conclude that ${\rm Suc}_k^{\texttt{GEN}}\left(\{X, Z\}, \{p,1-p\}, \pi \right) = \frac{k+p}{k+1}$.

Now we consider the case for general $\vec{n}_0$ and $\vec{n}_1$. Since $\vec{n}_0\cdot\vec{n}_1 = 0$, we can construct a third orthonormal real vector $\vec{n}_2$ that is orthogonal to $\vec{n}_0$ and $\vec{n}_1$. Consider a rotation matrix $R\in \mathrm{SO}(3)$
\begin{equation}\label{Eq:rot_matR}
    R = \left(\begin{array}{ccc}
       h_{0x} & h_{2x} & h_{1x} \\
       h_{0y} & h_{2y} & h_{1y} \\
       h_{0z} & h_{2z} & h_{1z} \\
    \end{array}\right).
\end{equation}
Note there is a correspondence between rotation matrices in $\mathrm{SO}(3)$ and unitary operators in $\mathrm{SU}(2)$. Specifically, for every rotation matrix $R$, there exists a unitary operator $U$ such that
\begin{equation}~\label{eq:rotate_R_unitary}
    U(\vec{n}\cdot \vec{\sigma})U^{\dagger} = (R\vec{n})\cdot \vec{\sigma},~\forall \vec{n} \in \mathbb{R}^3.
\end{equation}
Therefore, we have $U H_0 U^\dag = U(\vec{n}_0 \cdot \vec{\sigma}) U^\dag = [1, 0, 0]\cdot \vec{\sigma} = X$,
and $U H_1 U^\dag = U(\vec{n}_1 \cdot \vec{\sigma}) U^\dag = [0, 0, 1]\cdot \vec{\sigma} = Z$. It follows that 
\begin{equation}
    \widetilde{U}_0 = Ue^{-i\theta (\vec{n}_0\cdot \vec{\sigma})} U^\dag = e^{-i\theta U(\vec{n}_0\cdot \vec{\sigma})U^\dag} = e^{-i\theta X},
\end{equation}
and
\begin{equation}
    \widetilde{U}_1 = Ue^{-i\theta (\vec{n}_2\cdot \vec{\sigma})} U^\dag = e^{-i\theta U(\vec{n}_1\cdot \vec{\sigma})U^\dag} = e^{-i\theta Z}.
\end{equation}
After pre-processing the unknown evolution and applying the protocol for $\{X,Z\}$, we complete the proof.

\begin{remark}
Second, we note that the variance of the recognition success probability can be calculated as 
\begin{equation}
\begin{aligned}
    & \frac{1}{4\pi^2}\int_0^{2\pi} \int_0^{2\pi} \Bigg[p\left|\bra{0} {\rm QSP}_{Z,k}(\theta_0/2) \ket{0}\right|^2 + (1-p)\left|\bra{1} {\rm QSP}_{X,k}(\theta_1/2) \ket{1}\right|^2 - \frac{k+p}{k+1}\Bigg]^2 {\rm d}\theta_0 {\rm d}\theta_1 \\
    =&\; \frac{1}{4\pi^2}\int^{2\pi}_0\int^{2\pi}_0 \Bigg[ p +  \frac{(1-p)k}{k+1} - \frac{2(1-p)}{(k+1)^2}\sum_{l=1}^{k}(k-l+1)\cos(l\theta_1) - \frac{k+p}{k+1} \Bigg]^2 \mathrm{d}\theta_0 \mathrm{d}\theta_1 \\
    =&\;  \frac{1}{2\pi}\int^{2\pi}_0 \Bigg[ - \frac{2(1-p)}{(k+1)^2}\sum_{l=1}^{k}(k-l+1)\cos(l\theta_1)\Bigg]^2 \mathrm{d}\theta_1 \\
    =&\; \frac{1}{2\pi} \frac{2(1-p)}{(k+1)^4} \sum_{i=1}^k \sum_{j=1}^k \frac{(k-i+1)(k-j+1)}{2} \int^{2\pi}_0 \Bigg[\cos \Big((i+j)\theta_1\Big)+\cos\Big(\big(i-j)\theta_1\Big)\Bigg] \mathrm{d}\theta_1\\
    =&\; \frac{k(2k+1)(1-p)}{6(k+1)^3},
\end{aligned}
\end{equation}
where for  $(\sum_{l=1}^{k}(k-l+1)\cos(l\theta_1))^2$, we have
\begin{equation}
\begin{aligned}
    \big(\sum_{l=1}^{k}(k-l+1)\cos(l\theta_1)\big)^2 &= \sum_{i=1}^k\sum_{j=1}^k \frac{(k-i+1)(k-j+1)}{2} \Bigg[\cos\Big((i+j)\theta_1\Big)+\cos\Big(\big(i-j)\theta_1\Big)\Bigg].
\end{aligned}
\end{equation}
We see that the variance of the recognition success probability vanishes very quickly with the increasing number of queries to the unknown unitary.
\end{remark}

\section{Proof of Theorem 3}\label{appendix:proof_XYZ}
\renewcommand\theproposition{\textcolor{blue}{3}}
\begin{theorem}[Ternary Hamiltonian recognition]~\label{thm:recog_XYZ}
For $H_0 = \vec{n}_0\cdot \vec{\sigma},~H_1 = \vec{n}_1\cdot \vec{\sigma}$ and $H_2 = \vec{n}_2\cdot \vec{\sigma}$ with prior probabilities $p_0,p_1,p_2$, respectively, where $p_0+p_1+p_2=1$, if $\vec{n}_0 \perp \vec{n}_1 \perp \vec{n}_2$, the optimal average success probability of recognizing $\{H_0, H_1, H_2\}$ via a $k$-slot sequential strategy, where $k$ is odd, is given by
\begin{equation}
    {\rm Suc}_k^{\texttt{SEQ}}\left(\big\{H_i\big\}_{i=0}^2, \{p_i\}_{i=0}^2, \pi \right) = \frac{k+\max\{p_0,p_1,p_2\}}{k+1}.
\end{equation}
\end{theorem}
In the following, the proof is presented for the range $p_2 \ge p_1 \ge p_0$ as well.
\renewcommand{\theproposition}{S\arabic{proposition}}

\subsection{Achievability}
We will first prove the case $H_0=X,H_1=Y,H_2=Z$. The main idea is to coherently use the protocols given in Theorem 1 for recognizing $\{X,Y\}$ and $\{Y,Z\}$. Recall that by Theorem 1, there exists a QSP phase sequence $\vec{\phi}_x = (\phi_{x,0},\phi_{x,1},\cdots,\phi_{x,k})$ such that one can use the following QSP circuit to recognize $\{X,Z\}$
\begin{equation}
    {\rm QSP}^{\{X,Z\}}_{H,k}(\theta) = R_z(\phi_{x,0}) \prod_{j=1}^{k} U_H(\theta) R_z(\phi_{x,j}),
\end{equation}
where $U_H(\theta) = e^{-i H \theta/2}$ is an unknown unitary operation with $H\in\{X,Z\}$. After applying ${\rm QSP}^{\{X,Z\}}_{H,k}(\theta)$ on $\ket{0}$ and measuring the output state in the computational basis, we decide $H=X$ if the outcome is `1'. Theorem 1 ensures that this protocol has a success probability $\frac{2k+1}{2k+2}$ by querying the unknown unitary $k$ times. In specific, by Theorem 1, we have 
\begin{equation}
    {\rm QSP}^{\{X,Z\}}_{X,k}(\theta/2) = \begin{pmatrix}
        P_{x,k}(\theta) & -Q_{x,k}^*(\theta) \\
        Q_{x,k}(\theta) & P^*_{x,k}(\theta)
    \end{pmatrix},
\end{equation}
with
\begin{equation}
    \int_0^{2\pi} |Q_{x,k}(\theta)|^2 {\rm d}\theta = \frac{k}{k+1}
\end{equation}
as the average success probability of recognizing $X$. Similarly, there exists a QSP phase sequence $\vec{\phi}_y = (\phi_{y,0},\phi_{y,1},\cdots,\phi_{y,k})$ such that 
\begin{equation}
    {\rm QSP}^{\{Y,Z\}}_{H,k}(\theta) = R_z(\phi_{y,0}) \prod_{j=1}^{k} U_H(\theta) R_z(\phi_{y,j}).
\end{equation}
This QSP circuit can be used to optimally sense $Y$ and $Z$, and we have ${\rm QSP}^{\{Y,Z\}}_{Y,k}(\theta)\ket{0} = P_{y,k}(\theta) \ket{0} + Q_{y,k}(\theta) \ket{1}$. In particular, we could set $Q_{y,k}(\theta) = iQ_{x,k}^*(\theta)$. Based on the above, we construct a protocol for recognizing $\{X, Y, Z\}$. The quantum circuit of the protocol is given as follows.
\begin{equation*}
    \Qcircuit @C=0.65em @R=1.0em{
    \lstick{\ket{0}} &\gate{H} & \ctrl{1} & \ctrl{2} & \qw &\ctrl{2} &\qw &\ctrl{2} &\qw &\ctrl{2} &\qw &\cdots & &\qw &\ctrl{2} &\qw &\ctrl{2} &\ctrl{2} \barrier[-2.85em]{2} & \gate{H} &\targ  &\qw &\meter & \rstick{s_0} \cw\\ 
    \lstick{\ket{0}} & \qw & \targ & \qw & \gate{R_z(\phi_{x,k})} &\qswap &\qw &\qswap &\gate{R_z(\phi_{x,k-1})} &\qswap &\qw &\cdots & &\qw &\qswap &\gate{R_z(\phi_{x,0})} &\qswap &\qw  &\gate{U(\boldsymbol{\phi}_x, \boldsymbol{\phi}_y)} &\ctrl{-1} &\qw & \qw & \qw \\
    \lstick{\ket{0}} & \qw & \qw & \targ & \gate{R_z(\phi_{y,k})} &\qswap &\gate{U_H(\theta)} &\qswap  &\gate{R_z(\phi_{y,k-1})} &\qswap &\qw &\cdots & &\gate{U_H(\theta)} &\qswap &\gate{R_z(\phi_{y,0})} &\qswap &\targ &\qw &\ctrl{-2}  &\qw & \meter & \rstick{s_1} \cw\\
    & & & & & & & & & & & & & & & & & \qquad \ket{\psi^{(H)}}
    }
\end{equation*}
Here the angles for rotation gates in the second qubit are $\vec{\phi}_x = (\phi_{x,0},\phi_{x,1},\cdots,\phi_{x,k})$ and the angles for rotation gates in the third qubit are $\vec{\phi}_y = (\phi_{y,0},\phi_{y,1},\cdots,\phi_{y,k})$. Denote $\boldsymbol{\phi}_l =\sum_j \phi_{l,j},~l\in\{x,y\}$. The gate $U(\boldsymbol{\phi}_x, \boldsymbol{\phi}_y)$ has a matrix form
\begin{equation}
U(\boldsymbol{\phi}_x, \boldsymbol{\phi}_y) = \left(
\begin{array}{cc}
   \frac{1}{\sqrt{2}}e^{i\boldsymbol{\phi}_y/2}  & \frac{1}{\sqrt{2}} e^{-i\boldsymbol{\phi}_x/2} \\
   \frac{1}{\sqrt{2}}e^{i\boldsymbol{\phi}_x/2}  & -\frac{1}{\sqrt{2}} e^{-i\boldsymbol{\phi}_y/2}
\end{array}\right).
\end{equation}
Then the protocol is as follows.
\begin{enumerate}
\setlength{\itemsep}{0pt}
\setlength{\parskip}{0pt}
\item Input a zero state $\ket{000}$ and apply the unitary operations.
    \item Measure the first and third qubits to obtain a measurement outcome $s = s_0 s_1$.
    \begin{enumerate}
    \setlength{\itemsep}{0pt}
    \setlength{\parskip}{0pt}
        \item [-] If $s = 00 \textrm{ or } 10$, decide the Hamiltonian is $Z$;
        \item [-] If $s = 01$, decide the Hamiltonian is $Y$;
        \item [-] If $s = 11$, decide the Hamiltonian is $X$.
    \end{enumerate}
\end{enumerate}
In the following, we analyze the specific average success probability of recognizing different unknown Hamiltonians using this protocol. For convenience, we denote the state after the CNOT gate as \begin{equation}
\begin{aligned}
    \ket{\psi^{(H)}} &= \frac{1}{\sqrt{2}} \left( \ket{0}\ox R_z(\boldsymbol{\phi}_x) \ket{0} \ox {\rm QSP}^{\{Y,Z\}}_{H,k} (\theta)\ket{0} + \ket{1} \ox R_z(\boldsymbol{\phi}_y) \ket{1} \ox X \cdot {\rm QSP}^{\{X,Z\}}_{H,k}(\theta)\ket{1} \right)\\
    &= \frac{1}{\sqrt{2}} \left( e^{-i \boldsymbol{\phi}_x / 2} \ket{00} \ox {\rm QSP}^{\{Y,Z\}}_{H,k} (\theta)\ket{0} + e^{i \boldsymbol{\phi}_y / 2} \ket{11} \ox X \cdot {\rm QSP}^{\{X,Z\}}_{H,k}(\theta)\ket{1} \right),
\end{aligned}
\end{equation}
and notice that all gates after the CNOT gate will not affect the third qubit.

\vspace{1.5em}
\underline{If $U_H(\theta) = e^{-i Z\theta}$}, we have
\begin{equation}
    \ket{\psi^{(H)}} = \frac{1}{\sqrt{2}} \left(e^{-i \boldsymbol{\phi}_x / 2} \ket{00}\ox R_z(\boldsymbol{\phi}_y + \theta)\ket{0} + e^{i \boldsymbol{\phi}_y / 2} \ket{11} \ox R_z(\boldsymbol{\phi}_x + \theta)\ket{0}\right).
\end{equation}
Measuring the third qubit at the end will obtain the outcome `0' regardless of what $\theta$ is. Thus, the average success probability in this case is given by $\operatorname{Pr}(s_1 = 0|H = Z) = 1$.

\vspace{1.5em}
\underline{If $U_H(\theta) = e^{-i X\theta}$}, we have
\begin{equation}\label{Eq:caseX_psiprime}
\begin{aligned}
    \ket{\psi^{(X)}} &= \frac{1}{\sqrt{2}} \Big( e^{-i \boldsymbol{\phi}_x / 2} \ket{00} \ox {\rm QSP}^{\{Y,Z\}}_{X,k} (\theta)\ket{0} + e^{i \boldsymbol{\phi}_y / 2} \ket{11} \ox X \cdot {\rm QSP}^{\{X,Z\}}_{X,k}(\theta)\ket{1} \Big)\\
    &= \frac{e^{-i\boldsymbol{\phi}_x/2}}{\sqrt{2}}\ket{00}\ox \big(P_{y,k}(\theta) \ket{0} - i Q_{y,k}(\theta) \ket{1}\big) + \frac{e^{i\boldsymbol{\phi}_y/2}}{\sqrt{2}}\ket{11}\ox \big(P_{x,k}^*(\theta) \ket{0} - Q_{x,k}^*(\theta) \ket{1}\big)\\
    & = \left(P_{y,k}(\theta)\frac{e^{-i\boldsymbol{\phi}_x/2}}{\sqrt{2}}\ket{00} + P_{x,k}^*(\theta) \frac{e^{i\boldsymbol{\phi}_y/2}}{\sqrt{2}}\ket{11} \right)\ox \ket{0} \\
    &\qquad + \left(- iQ_{y,k}(\theta) \frac{e^{-i\boldsymbol{\phi}_x/2}}{\sqrt{2}}\ket{00} - Q_{x,k}^*(\theta) \frac{e^{i\boldsymbol{\phi}_y/2}}{\sqrt{2}}\ket{11} \right)\ox \ket{1},
\end{aligned}
\end{equation}
where in the second equality we used the fact that $R_x(\theta) = R_z(-\pi/2)R_y(\theta)R_z(\pi/2),\forall \theta \in [0,2\pi]$, and thus
\begin{equation}\label{Eq:XputinY}
\begin{aligned}
    {\rm QSP}^{\{Y,Z\}}_{X,k}(\theta/2)\ket{0} &= R_z\left(-\frac{\pi}{2}\right) {\rm QSP}^{\{Y,Z\}}_{Y,k}(\theta/2) R_z\left(\frac{\pi}{2}\right)\ket{0}\\
    &= P_{y,k}(\theta) \ket{0} + e^{-i\pi/2}Q_{y,k}(\theta)\ket{1} = P_{y,k}(\theta) \ket{0} - i Q_{y,k}(\theta)\ket{1}.
\end{aligned}
\end{equation}
It can be seen that if a unitary $e^{-iX\theta}$ is inserted into a QSP circuit ${\rm QSP}^{Y}_{H}$, compared with inserting a unitary $e^{-iY\theta}$, a relative phase $-i$ will be accumulated. Since we determine $H=X$ if the outcome on the first and third qubit is `11' and all gates after the CNOT gate will not affect the third qubit, now we could only consider the case where the third qubit outcome is in $\ket{1}$. Followed by Eq.~\eqref{Eq:caseX_psiprime}, applying post-selection and the rest of gates gives
\begin{alignat}{1}
    &\,\, \left( \idop^{\ox 2} \ox \ketbra{1}{1} \right) \ket{\psi^{(X)}} \\
    =&\,\, \left(- iQ_{y,k}(\theta) \frac{e^{-i\boldsymbol{\phi}_x/2}}{\sqrt{2}}\ket{00} - Q_{x,k}^*(\theta) \frac{e^{i\boldsymbol{\phi}_y/2}}{\sqrt{2}}\ket{11} \right)\ox \ket{1}\\
    \xrightarrow{H} &\,\, Q_{x,k}^* (\theta) \left[\ket{0} \ox \left(\frac{e^{-i\boldsymbol{\phi}_x/2}}{2} \ket{0} - \frac{e^{i\boldsymbol{\phi}_y/2}}{2}\ket{1} \right)\ox \ket{1} + \ket{1} \ox \left(\frac{e^{-i\boldsymbol{\phi}_x/2}}{2} \ket{0} + \frac{e^{i\boldsymbol{\phi}_y/2}}{2}\ket{1} \right)\ox \ket{1}\right]\label{Eq:psi_Had_XZ}\\
    \xrightarrow{U(\boldsymbol{\phi}_x, \boldsymbol{\phi}_y)} &\,\, Q_{x,k}^*(\theta) \left(\frac{1}{\sqrt{2}} \ket{011} + \frac{e^{-i(\boldsymbol{\phi}_x - \boldsymbol{\phi}_y)/2}}{\sqrt{2}} \ket{101}\right)\label{Eq:psi_U_XZ}\\
    \xrightarrow{\operatorname{Toffoli}} &\,\, Q_{x,k}^*(\theta) \left(\ket{1} \ox \frac{1}{\sqrt{2}}\left(\ket{1} + e^{-i(\boldsymbol{\phi}_x - \boldsymbol{\phi}_y)/2} \ket{0}\right) \ox \ket{1}\right),
\end{alignat}
where in Eq.~\eqref{Eq:psi_Had_XZ} we used $Q_{y,k}(\theta) = iQ_{x,k}^*(\theta)$, and in Eq.~\eqref{Eq:psi_U_XZ} we used
\begin{equation}
    U(\boldsymbol{\phi}_x, \boldsymbol{\phi}_y)\left(\frac{e^{-i\boldsymbol{\phi}_x/2}}{2} \ket{0} - \frac{e^{i\boldsymbol{\phi}_y/2}}{2}\ket{1}\right) = \frac{1}{\sqrt{2}}\ket{1},
\end{equation}
and
\begin{equation}
    U(\boldsymbol{\phi}_x, \boldsymbol{\phi}_y)\left(\frac{e^{-i\boldsymbol{\phi}_x/2}}{2} \ket{0} + \frac{e^{i\boldsymbol{\phi}_y/2}}{2}\ket{1} \right) = \frac{e^{-i(\boldsymbol{\phi}_x - \boldsymbol{\phi}_y)/2}}{\sqrt{2}}\ket{0}.
\end{equation}
Therefore, the average probability of getting outcome `11' in the first and third qubits when asserting $e^{-iX\theta/2}$ is determined by 
\begin{equation}\label{Eq:suc_prob_X11}
    \operatorname{Pr}(s = 11|H = X) = \int_{0}^{2\pi} |Q_{x,k}^*(\theta)|^2 {\rm d}\theta = \frac{k}{k+1}.
\end{equation}

\underline{When $U_H(\theta) = e^{-i Y\theta}$}, we have
\begin{equation}
\begin{aligned}
    \ket{\psi^{(Y)}} &= \frac{1}{\sqrt{2}} \Big( e^{-i \boldsymbol{\phi}_x / 2} \ket{00} \ox {\rm QSP}^{\{Y,Z\}}_{Y,k} (\theta/2)\ket{0} + e^{i \boldsymbol{\phi}_y / 2} \ket{11} \ox X \cdot {\rm QSP}^{\{X,Z\}}_{Y,k}(\theta/2)\ket{1} \Big)\\
    &= \frac{e^{-i\boldsymbol{\phi}_x/2}}{\sqrt{2}}\ket{00}\ox \big(P_{y,k}(\theta) \ket{0} + Q_{y,k}(\theta) \ket{1}\big) + \frac{e^{i\boldsymbol{\phi}_y/2}}{\sqrt{2}}\ket{11}\ox \big(P_{x,k}^*(\theta) \ket{0} + i Q_{x,k}^*(\theta) \ket{1}\big)\\
    & = \left(P_{y,k}(\theta)\frac{e^{-i\boldsymbol{\phi}_x/2}}{\sqrt{2}}\ket{00} + P_{x,k}^*(\theta) \frac{e^{i\boldsymbol{\phi}_y/2}}{\sqrt{2}}\ket{11} \right)\ox \ket{0}\\
    &\qquad + \left( Q_{y,k}(\theta) \frac{e^{-i\boldsymbol{\phi}_x/2}}{\sqrt{2}}\ket{00} + iQ_{x,k}^*(\theta) \frac{e^{i\boldsymbol{\phi}_y /2}}{\sqrt{2}}\ket{11} \right)\ox \ket{1},
\end{aligned}
\end{equation}
where in the second equality we used the fact that $R_y(\theta) = R_z(\pi/2)R_x(\theta)R_z(-\pi/2),\forall \theta \in [0,2\pi]$, and thus
\begin{equation}
\begin{aligned}
    {\rm QSP}^{\{X,Z\}}_{Y,k}(\theta/2)\ket{1} &= R_z\left(\frac{\pi}{2}\right) {\rm QSP}^{\{X,Z\}}_{X,k}(\theta/2) R_z\left(-\frac{\pi}{2}\right) \ket{1}\\
    &= - e^{-i\pi/2}Q_{x,k}^*(\theta) \ket{0} + P_{x,k}^*(\theta)\ket{1} = iQ_{x,k}^*(\theta) \ket{0} + P_{x,k}^*(\theta)\ket{1}.
\end{aligned}
\end{equation}
Similarly to the previous one, consider the case where the third qubit is in $\ket{1}$.
\begin{alignat}{1}
    &\,\, \left( \idop^{\ox 2} \ox \ketbra{1}{1} \right) \ket{\psi^{(Y)}} \\
    =&\,\, \left( Q_{y,k}(\theta) \frac{e^{-i\boldsymbol{\phi}_x/2}}{\sqrt{2}}\ket{00} + iQ_{x,k}^*(\theta) \frac{e^{i\boldsymbol{\phi}_y /2}}{\sqrt{2}}\ket{11} \right)\ox \ket{1}\\
    \xrightarrow{H} &\,\, iQ_{x,k}^* (\theta) \left[\ket{0} \ox \left(\frac{e^{-i\boldsymbol{\phi}_x/2}}{2} \ket{0} + \frac{e^{i\boldsymbol{\phi}_y/2}}{2}\ket{1} \right)\ox \ket{1} + \ket{1} \ox \left(\frac{e^{-i\boldsymbol{\phi}_x/2}}{2} \ket{0} - \frac{e^{i\boldsymbol{\phi}_y/2}}{2}\ket{1} \right)\ox \ket{1}\right]\label{Eq:psi_Had_YZ}\\
    \xrightarrow{U(\boldsymbol{\phi}_x, \boldsymbol{\phi}_y)} &\,\, iQ_{x,k}^*(\theta)  \left( \frac{e^{-i(\boldsymbol{\phi}_x - \boldsymbol{\phi}_y)/2}}{\sqrt{2}}\ket{001} + \frac{1}{\sqrt{2}} \ket{111}\right) \\
    \xrightarrow{\operatorname{Toffoli}} &\,\, iQ_{x,k}^*(\theta) \left(\ket{0} \ox \frac{1}{\sqrt{2}}\left( e^{-i(\boldsymbol{\phi}_x - \boldsymbol{\phi}_y)/2} \ket{0} + \ket{1}\right) \ox \ket{1}\right),
\end{alignat}
where we used $-iQ_{y,k}(\theta) = Q_{x,k}^*(\theta)$ in Eq.~\eqref{Eq:psi_Had_XZ}. Therefore, the average probability of getting outcome `01' in the first and third qubits when asserting $e^{-iY\theta}$ is given by 
\begin{equation}\label{Eq:suc_prob_Y01}
    \operatorname{Pr}(s=01|H=Y) = \int_{0}^{2\pi} |iQ_{x,k}^*(\theta)|^2 {\rm d}\theta = \frac{k}{k+1}.
\end{equation}
Combining the above three cases, we have the overall average success probability that satisfies
\begin{equation}\label{Eq:XYZ_suc_geq}
\begin{aligned}
   {\rm Suc}_k^{\texttt{SEQ}}\left(\big\{H_0, H_1, H_2\big\}, \{p_0,p_1,p_2\}, \pi \right) &\geq p_2 \operatorname{Pr}(s_1 = 0|H = Z) + p_0\operatorname{Pr}(s = 11|H = X) + p_1\operatorname{Pr}(s = 01|H = Y) \\
    &= p_2 + \frac{(p_0 + p_1)k}{k+1} = \frac{k+p_2}{k+1}.
\end{aligned}
\end{equation}

\subsection{Optimality}

Similar to the case of binary Hamiltonian recognition, the optimal average success probability can be written as the following SDP problem.
\begin{equation}\label{Eq:opt_suc_pri_XYZ}
    {\rm Suc}_k^{\texttt{SEQ}}\left(\big\{H_0, H_1, H_2\big\}, \{p_0,p_1,p_2\}, \pi \right) = \max_{\substack{\{T_0, T_1, T_2\}\in \cT^{\texttt{SEQ}} \\ p_2 \ge p_1 \ge p_0}}\; \tr\Big(p_0T_0 \Omega_{X}^{(k)}(\pi) + p_1T_1 \Omega_{Y}^{(k)}(\pi) + p_2T_2 \Omega_{Z}^{(k)}(\pi)\Big).
\end{equation}
And the dual problem of Eq.~\eqref{Eq:opt_suc_pri_XYZ} can be written as follows.
\begin{equation}\label{sdp:XYZ}
\begin{aligned}
    {\rm Suc}_k^{\texttt{SEQ}}\left(\big\{H_0, H_1, H_2\big\}, \{p_0,p_1,p_2\}, \pi \right) = \min &\;\; \lambda \\
    {\rm s.t.} &\;\; \Omega_{X}^{(k)}(\pi) \leq \frac{1}{p_0}\lambda \overline{W},\\
    &\;\; \Omega_{Y}^{(k)}(\pi) \leq \frac{1}{p_1}\lambda\overline{W},\\
    &\;\; \Omega_{Z}^{(k)}(\pi) \leq \frac{1}{p_2}\lambda\overline{W},\\
    &\;\; \overline{W} \in \overline{\cW}^{\texttt{SEQ}}.
\end{aligned}
\end{equation}

In the following, we will construct a feasible solution to the above optimization problem as $\lambda_k = \frac{k+p_2}{k+1}$ and
\begin{equation}
    \overline{W}_k = \frac{p_0(1+k)}{k+p_2}\Omega_{X}^{(k)}(\pi) + \frac{p_1(1+k)}{k+p_2}\Omega_{Y}^{(k)}(\pi) + \frac{p_2(1+k)}{k+p_2}\Omega_{Z}^{(k)}(\pi) - \frac{(p_0+p_1)}{k+p_2}\dketbra{\idop}{\idop}^{\ox k}.
\end{equation}
By Lemma~\ref{lem:Ck_geq_id}, we have $\overline{W}_k \geq 0$. It is easy to see that $\Omega_{X}^{(k)}(\pi)$, $\Omega_{Z}^{(k)}(\pi)$, and $\ketbra{\idop}{\idop}^{\ox k}$ are Choi operators of $k-1$-slot quantum combs. Therefore, $\overline{W}_k$ is also the Choi operator of a $k-1$-slot quantum comb, thus $\overline{W}_k \in \overline{\cW}^{\texttt{SEQ}}$. Moreover, by Lemma~\ref{lem:Ck_geq_id}, we have
\begin{equation}
\begin{aligned}
    & \frac{1}{p_0}\lambda_k \overline{W}_k - \Omega_{X}^{(k)}(\pi) = \frac{p_1}{p_0}\Omega_{Y}^{(k)}(\pi) + \frac{p_2}{p_0}\Omega_{Z}^{(k)}(\pi) - \frac{(p_0+p_1)}{p0(k+1)}\dketbra{\idop}{\idop}^{\ox k} \geq 0,\\
    & \frac{1}{p_1}\lambda_k \overline{W}_k - \Omega_{Y}^{(k)}(\pi) =  \frac{p_0}{p_1}\Omega_{X}^{(k)}(\pi) +  \frac{p_2}{p_1}\Omega_{Z}^{(k)}(\pi) - \frac{(p_0+p_1)}{p1(k+1)}\dketbra{\idop}{\idop}^{\ox k} \geq 0,\\
    & \frac{1}{p_2}\lambda_k \overline{W}_k - \Omega_{Z}^{(k)}(\pi) = \frac{p_0}{p_2}\Omega_{X}^{(k)}(\pi) + \frac{p_1}{p_2}\Omega_{Y}^{(k)}(\pi) - \frac{(p_0+p_1)}{p2(k+1)}\dketbra{\idop}{\idop}^{\ox k} \geq 0.
\end{aligned}
\end{equation}
It follows that $\{\lambda_k, \overline{W}_k\}$ satisfies all constraints in the SDP~\eqref{sdp:XYZ} and is a feasible solution to it. Therefore, we have
\begin{equation}\label{Eq:XYZ_suc_leq}
    {\rm Suc}_k^{\texttt{SEQ}}\left(\big\{H_0, H_1, H_2\big\}, \{p_0,p_1,p_2\}, \pi \right) \leq \frac{k+p_2}{k+1}.
\end{equation}
Combining Eq.~\eqref{Eq:XYZ_suc_geq} and Eq.~\eqref{Eq:XYZ_suc_leq}, we arrive at
\begin{equation}
    {\rm Suc}_k^{\texttt{SEQ}}\left(\big\{H_0, H_1, H_2\big\}, \{p_0,p_1,p_2\}, \pi \right) = \frac{k+p_2}{k+1}.
\end{equation}
Using the same argument in Theorem 1, and considering a rotation matrix in Eq.~\eqref{Eq:rot_matR}, we know that the above result holds for any $H_0 = \vec{n}_0\cdot \vec{\sigma},~H_1 = \vec{n}_1\cdot \vec{\sigma}$ and $H_2 = \vec{n}_2\cdot \vec{\sigma}$ such that $\vec{n}_0 \perp \vec{n}_1 \perp \vec{n}_2$. Hence, we complete the proof.

\begin{remark}
We would like to emphasize that during the proof, we have constructed two QSP phase sequences for recognizing Hamiltonian pairs $\{X,Z\}$ and $\{Y,Z\}$, respectively. However, it is noteworthy that every QSP phase sequence that can sense $\{X,Z\}$ can also sense $\{Y,Z\}$, but we could not use the same QSP in our circuit structure. The reason is as follows: if ${\rm QSP}^{\{Y,Z\}}_{H,k}(\theta/2) = {\rm QSP}^{\{X,Z\}}_{H,k}(\theta/2)$, followed by Eq.~\eqref{Eq:XputinY}, we have $P_{y,k}(\theta) = P_{x,k}(\theta), -iQ_{y,k}(\theta) = Q_{x,k}(\theta)$. Thus, the state in Eq.~\eqref{Eq:psi_Had_XZ} can be written as
\begin{equation}
    \ket{0} \ox \left(Q_{x,k}(\theta) \frac{e^{-i\boldsymbol{\phi}_x/2}}{2} \ket{0} - Q_{x,k}^*(\theta) \frac{e^{i\boldsymbol{\phi}_y/2}}{2}\ket{1} \right)\ox \ket{1} + \ket{1} \ox \left(Q_{x,k}(\theta) \frac{e^{-i\boldsymbol{\phi}_x/2}}{2} \ket{0} + Q_{x,k}^*(\theta) \frac{e^{i\boldsymbol{\phi}_y/2}}{2}\ket{1} \right)\ox \ket{1}.
\end{equation}
Notice that we can not extract the same phase for the above state to allow the intermediate state to go to $\ket{0}$ or $\ket{1}$.
\end{remark}
\begin{remark}
Similar to the binary case, general strategies do not offer any advantage in recognizing $\{X,Y,Z\}$.
\end{remark}

\section{Experiments on a superconducting quantum computer}\label{appendix:realdevice_exp}

The QPU device is a surface-13 tunable coupling superconducting quantum processor~\cite{Bu2024}. For the processor, the median single-qubit gate fidelity shows a tiny decrease from about 99.95\% for the isolated Randomized Benchmarking to 99.94\% for the simultaneous one. First, we uniformly sample 64 values of $\theta_j$ from $(0,\pi]$ to construct two sets of gates, $\{e^{-iX\theta_j}\}_{j=1}^k$ and $\{e^{-iZ\theta_j}\}_{j=1}^k$, corresponding to the Hamiltonians $X$ and $Z$, respectively. For each gate, as a given unknown unitary, we repeat Algorithm~\ref{alg:binary_protocol} 512 times and record our predictions.

\begin{algorithm}[H]\label{alg:binary_protocol}
\SetKwInOut{Input}{Input}
\SetKwInOut{Output}{Output}
\Input{$k$ copies of an evolution dynamic $U_{H}(\theta) = e^{-iH \theta}$, where $k$ is odd, $t \in \RR$ is unknown and $H$ is an unknown Hamiltonian from a known set $\{X, Z\}$.}
\Output{ a guess of $H$ from $\{X, Z\}$.}
Determine the vector of angles $\vec{\phi} = \phi_0, \ldots, \phi_k$ such that
\begin{equation}
    \bra{0} R_z(\phi_0) \prod_{j=1}^k W_x(a) R_z(\phi_j) \ket{0} 
    = \frac{2}{k+1} \sum_{l=1,\,l\text{ odd}}^{k} T_l(a).
\end{equation}
This can be achieved following the implementation of Theorem 3-5 in~\cite{gilyen_quantum_2018}\;
Construct the single-qubit circuit as follows:
\begin{equation}
    {\rm QSP}_{H,k}(\theta) = R_z(\phi_0) \prod_{j=1}^k U_{H}(\theta) R_z(\phi_j)
.\end{equation}

Input the zero state $\ket{0}$, run the circuit, and perform a computational basis measurement at the end\;

If the measurement outcome is $0$, return $H = Z$; Otherwise return $H = X$.

\caption{Hamiltonian recognition for $\{X,Z\}$}
\label{alg:xxx}
\end{algorithm}

\begin{figure}[t]
    \centering
    \includegraphics[width=.6\linewidth]{figure4.pdf}
    \caption{Average success probability of recognizing $\{H_0, Z\}$.}
    \label{fig:generalHZ}
\end{figure}

\section{General axis directions}\label{appendix:general_axis}
Now we consider a more general scenario where $\vec{n}_0$ and $\vec{n}_1$ are not orthogonal for $H_0 = \vec{n}_0\cdot\vec{\sigma}$ and $H_1 = \vec{n}_1\cdot\vec{\sigma}$. Without loss of generality, we set $H_1 = Z$. Since $\vec{n}_0\coloneqq(n_{0,x},n_{0,y},n_{0,z})$ is a unit vector, we can find a unitary $U$ that diagonalize $H_0$ such that $UH_0U^\dagger = Z$. Therefore, for fixed $k$ and $\vec{n}_0$, we could obtain $\Omega_{H_0}^{(k)}(\pi) = (U\ox U^*)\Omega_{Z}^{(k)}(\pi)(U^\dagger \ox U^T)$. Further, we numerically solve SDP
\begin{equation}\label{sdp:XZ}
\begin{aligned}
    {\rm Suc}_k^{\texttt{SEQ}}\big(\{H_0, Z\}, \pi \big) = \min &\;\; \lambda \\
    {\rm s.t.} &\;\; \Omega_{H_0}^{(k)}(\pi) \leq 2\lambda \overline{W},\\
    &\;\; \Omega_{Z}^{(k)}(\pi) \leq 2\lambda\overline{W},\\
    &\;\; \overline{W} \in \overline{\cW}^{\texttt{SEQ}}.
\end{aligned}
\end{equation}
to estimate ${\rm Suc}_k^{\texttt{SEQ}}\left(\big\{H_0, Z\big\}, \pi \right)$ for varying $n_{0,x}$ and $n_{0,z}$.

We illustrate the average success probability of recognizing $\{H_0, Z\}$ in Fig.~\ref{fig:generalHZs}, for the cases of $k=1$ and $k=3$. In each subplot, the top layer represents the average success probability corresponding to the optimal protocol obtained by solving the SDP; the middle layer depicts the average success probability given by the protocol in Theorem 1 specifically designed for $\{X, Z\}$, and the bottom layer shows the performance of random guessing. It can be observed that within a certain range, the $\{X, Z\}$ protocol performs well. As the number of slots increases, the optimal protocol exhibits high accuracy over a larger range.


\end{document}